\begin{document}

\mainmatter  % start of an individual contribution

% first the title is needed
\title{Quantitative Analysis of Smart Contracts}

% a short form should be given in case it is too long for the running head
\titlerunning{Quantitative Analysis of Smart Contracts}

% the name(s) of the author(s) follow(s) next
%
% NB: Chinese authors should write their first names(s) in front of
% their surnames. This ensures that the names appear correctly in
% the running heads and the author index.
%
\author{Krishnendu Chatterjee$^1$, Amir Kafshdar Goharshady$^1$, Yaron Velner$^2$}
\authorrunning{K. Chatterjee, A.K. Goharshady, Y. Velner}

% the affiliations are given next; don't give your e-mail address
% unless you accept that it will be published
\institute{$^1$IST Austria (Institute of Science and Technology Austria)\\ $^2$Hebrew University of Jerusalem\\
	\texttt{\{krishnendu.chatterjee, amir.goharshady\}@ist.ac.at} \\
	\texttt{yaron.welner@mail.huji.ac.il}
}

%\toctitle{Lecture Notes in Computer Science}
%\tocauthor{Authors' Instructions}
\maketitle

\begin{abstract}
Smart contracts are computer programs that are
executed by a network of mutually distrusting agents, without the
need of an external trusted authority.
Smart contracts handle and transfer assets of considerable value (in the form of crypto-currency like Bitcoin).
Hence, it is crucial that their implementation is bug-free.
We identify the utility (or expected payoff) of interacting with such smart 
contracts as the basic and canonical quantitative property for such contracts.
We present a framework for such quantitative analysis of smart contracts.
Such a formal framework poses new and novel research challenges in programming languages,
as it requires modeling of game-theoretic aspects to analyze incentives 
for deviation from honest behavior and modeling utilities which are not 
specified as standard temporal properties such as safety and termination. 
While game-theoretic incentives have been analyzed in the security community, 
their analysis has been restricted to the very special case of stateless games.
However, to analyze smart contracts, stateful analysis is required as it must account 
for the  different program states of the protocol.
Our main contributions are as follows: we present
(i)~a simplified programming language for smart contracts; 
(ii)~an automatic translation of the programs to state-based games;
(iii)~an abstraction-refinement approach to solve such games; and
(iv)~experimental results on real-world-inspired smart contracts.

\end{abstract}

	%This file defines the commands, styles and macros used in the paper
\definecolor{ndkeyword}{rgb}{0.2, 0.6, 0.8}

\lstdefinelanguage{solidity}{
	keywords={returns, contract, typeof, new, true, false, catch, function, return, null, catch, switch, var, if, in, not, while, do, else, case, break, false, true, for, public, private},
	keywordstyle=\color{blue}\bfseries,
	ndkeywords={boolean, int, uint, uint256, int256, address, class, export, boolean, throw, implements, import, this, bytes, bool},
	ndkeywordstyle=\color{ndkeyword}\bfseries,
	identifierstyle=\color{black},
	sensitive=false,
	comment=[l]{//},
	morecomment=[s]{/*}{*/},
	morestring=[b]',
	morestring=[b]",
	backgroundcolor=\color{white},
	numbers=left,
	showstringspaces=false,
	showspaces=false,
	tabsize=2,
	basicstyle=\scriptsize\ttfamily,
	keywordstyle=\color{blue}\bfseries,
	xleftmargin=2.0ex,
	numberstyle=\scriptsize,numbersep=1pt,
	commentstyle=\color{brownbright},
	escapeinside={/*@}{@*/}
}

\lstdefinelanguage{ours}{
	keywords={null, function, payout, payable, caller, if, else, and, or, return},
	keywordstyle=\color{blue}\bfseries,
	ndkeywords={numeric, map, id},
	ndkeywordstyle=\color{ndkeyword}\bfseries,
	identifierstyle=\color{black},
	sensitive=false,
	comment=[l]{//},
	morecomment=[s]{/*}{*/},
	morestring=[b]',
	morestring=[b]",
	backgroundcolor=\color{white},
	%numbers=left,
	showstringspaces=false,
	showspaces=false,
	tabsize=2,
	basicstyle=\scriptsize\ttfamily,
	keywordstyle=\color{blue}\bfseries,
	xleftmargin=2.0ex,
	numberstyle=\scriptsize,numbersep=1pt,
	commentstyle=\color{red},
	escapeinside={/*@}{@*/}
}

\newcommand{\Xup}{\overline{X}}
\newcommand{\E}{\mathbb{E}}
\newcommand{\Xdown}{\underline{X}}
\newcommand{\Rup}{\overline{R}}
\newcommand{\Rdown}{\underline{R}}
\newcommand{\xup}{\overline{x}}
\newcommand{\xdown}{\underline{x}}
\newcommand{\Tup}{\overline{T}}
\newcommand{\Tdown}{\underline{T}}
\newcommand{\tup}{\overline{t}}
\newcommand{\tdown}{\underline{t}}
\newcommand{\Xdefault}{\mathring{X}}
\newcommand{\xdefault}{\mathring{x}}
\newcommand{\pho}{\rho}
\newcommand{\akhar}{\blacksquare}
\newcommand{\avval}{\square}
\newcommand{\Eta}{H}
\newcommand{\hich}{\textsc{Null}}
\newcommand{\function}{{\color{blue}\textbf{\texttt{function }}}}
\newcommand{\kw}[1]{{\color{blue}\textbf{\texttt{#1}}}}
\newcommand{\moves}{\mathcal{M}}
\newcommand{\naught}{\boxtimes}
\newcommand{\Supp}{\textit{{Supp}}}
\newcommand{\cond}{\textit{cond}}
\newcommand{\ENd}{\textit{end}}
\newcommand{\Movecall}{\textit{call}}
\newcommand{\Movepay}{\textit{pay}}
\newcommand{\Movedecide}{\textit{decide}}
\newcommand{\rah}{\mathscr{P}}
\newcommand{\Histories}{\mathscr{H}}
\newcommand{\purestrategy}{\upvarphi}
\newcommand{\strategy}{\sigma}
\newcommand{\gamevalue}{\upupsilon}
\newcommand{\matgame}{\mathcal{G}}
\newcommand{\parties}{\mathbb{P}}
\newcommand{\party}{\mathbbmss{p}}
\newcommand{\last}{\mathit{last}}
\newcommand{\Prob}{\textup{\textsf{Prob}}}
\newcommand{\fhlen}{\mathsf{L}}
\makeatletter
\def\BState{\State\hskip-\ALG@thistlm}
\makeatother
\newcommand{\val}{\textit{val}}
\newcommand{\up}{\uparrow}
\newcommand{\down}{\downarrow}
\newcommand{\ab}{\texttt{a}}
\newcommand{\refines}{\sqsubseteq}
\newcommand{\partition}{\Uppi}
\newcommand{\dummy}{\textsl{d}}
\newcommand{\Dummy}{\textsl{D}}
\newcommand{\objects}{\mathcal{O}}
\newcommand{\balance}{\text{\textbeta}}
\renewcommand{\a}{\text{a}}
\renewcommand{\b}{\text{b}}
\vspace{-2em}
\section{Introduction}
\vspace{-0.5em}
In this work we present a quantitative stateful game-theoretic framework for formal analysis of smart-contracts.

\smallskip\noindent{\em Smart contracts.}
Hundreds of crypto-currencies are in use today, and investments in them are increasing steadily~\cite{coinmarketcap}.
These currencies are not controlled by any central authority like governments or banks, instead they are governed by the \emph{blockchain} protocol, which dictates the rules and determines the outcomes, 
e.g., the validity of money transactions and account balances.
Blockchain was initially used for peer-to-peer Bitcoin payments~\cite{nakamoto2008bitcoin}, but recently it is also used for running programs (called smart contracts).
A \emph{smart contract} is a program that runs on the blockchain, which enforces its correct execution (i.e., that it is running as originally programmed).
This is done by encoding semantics in crypto-currency transactions.
For example, Bitcoin transaction scripts allow users to specify conditions, or contracts, 
which the transactions must satisfy prior to acceptance.
Transaction scripts can encode many useful functions, such as validating that a payer owns a coin she is spending or enforcing rules for multi-party transactions.
The Ethereum crypto-currency~\cite{buterin2013ethereum} allows arbitrary stateful Turing-complete conditions over the transactions which gives rise to smart contracts that can implement a wide range of applications, such as financial instruments (e.g., financial derivatives or wills) or autonomous governance applications (e.g., voting systems).
The protocols are globally specified and their implementation is decentralized. Therefore, there is no central authority and they are immutable.
Hence, the economic consequences of critical bugs in a smart contract cannot be reverted.

\smallskip\noindent{\em  Types of Bugs.} 
There are two types of bugs with monetary consequences:
\begin{enumerate}
\item {\em Coding errors.}
 Similar to standard programs, bugs could arise from coding mistakes.
At one reported case~\cite{ethercamp},
mistakenly replacing $\texttt{+=}$ operation with $\texttt{=+}$ enabled loss of tokens that were backed by \$800,000 of investment.

\item {\em Dishonest interaction incentives.}
Smart contracts do not fully dictate the behavior of participants.
 They only specify the outcome (e.g., penalty or rewards) of the behaviors.
Hence, a second source for bugs is the high level {\em interaction aspects} that could give a participant unfair advantage and incentive for dishonest behavior.
For example, a naive design of rock-paper-scissors game~\cite{delmolino2015step} allows playing sequentially, rather then concurrently, and gives advantage to the second player who can see the opponent's move.
\end{enumerate}

\smallskip\noindent{\em DAO attack: interaction of two types of bugs.}
Quite interestingly a coding bug can incentivize dishonest behavior as in the famous DAO attack~\cite{MIT_REVIEW}.
The Decentralized Autonomous Organization (DAO)~\cite{DAO_white_paper} is an Ethereum smart contract~\cite{DAO_problem}.
The contract consists of investor-directed venture capital fund.
On June 17, 2016 an attacker exploited a bug in the contract to extract \$80 million~\cite{MIT_REVIEW}.
Intuitively, the root cause was that the contract allowed users to first get hold of their funds, and only then updated their balance records while a semantic detail allowed the attacker to withdraw multiple times before the update.

\smallskip\noindent{\em Necessity of formal framework.}
Since bugs in smart contracts have direct economic consequences and are irreversible, 
they have the same status as safety-critical errors for programs and reactive systems and
must be detected before deployment.
Moreover, smart contracts are deployed rapidly. 
 There are over a million smart contracts in Ethereum, holding over 15 billion dollars at the time of writing~\cite{Etherscan}.
It is impossible for security researchers to analyze all of them, and 
 lack of automated tools for programmers makes them error prone.
Hence, a formal analysis framework for smart contract bugs is of great importance.

\smallskip\noindent{\em Utility analysis.} 
In verification of programs, specifying objectives is non-trivial and a key goal is to consider specification-less verification, where basic
properties are considered canonical.
For example, termination is a basic property in program analysis; and 
data-race freedom or serializability are basic properties in concurrency. 
Given these properties, models are verified wrt them without considering
any other specification.
For smart contracts, describing the correct specification that prevents dishonest behavior is more challenging due to the presence of game-like interactions. 
We propose to consider the expected user utility (or payoff) 
that is guaranteed even in presence of adversarial behavior of other agents as a canonical property.
Considering malicious adversaries is standard in 
game theory.
For example, the expected utility of a fair lottery is $0$.
An analysis reporting a different utility signifies a bug.

\smallskip\noindent{\em New research challenges.}
Coding bugs are detected by classic verification, program analysis, and model checking tools~\cite{ClarkeBook,SMC08}.
However, a formal framework for incentivization bugs 
presents a new research challenge for the programming language community.
Their analysis must overcome two obstacles:
(a)~the framework will have to handle game-theoretic aspects to model interactions and incentives for dishonest behavior; and 
(b)~it will have to handle properties that cannot be deduced from standard temporal properties such as safety or termination, 
but require analysis of monetary gains (i.e., quantitative properties).

While game-theoretic incentives are widely analyzed by the security community~(e.g., see~\cite{bonneau2015sok}), their analysis 
is typically restricted to the very special case of one-shot games that do not consider different states of the program,
and thus the consequences of decisions on the next state of the program 
are ignored.
In addition their analysis is typically ad-hoc and stems from brainstorming and special techniques.
This could work when very few protocols existed (e.g., when bitcoin first emerged) and deep thought was put into making them elegant and analyzable.
However, the fast deployment of smart contracts makes it crucial to automate the process and make it accessible to programmers.

\smallskip\noindent{\em Our contribution.}
In this work we present a formal framework for quantitative analysis of utilities in smart contracts. Our contributions are as follows:
\begin{enumerate}
\item We present a simplified (loop-free) programming language that allows game-theoretic interactions.
We show that many classical smart contracts can be easily described in our language, and conversely,
a smart contract programmed in our language can be easily translated to Solidity~\cite{solidity}, which is 
the most popular Ethereum smart contract language. 

\item The underlying mathematical model for our language is stateful concurrent games.
We automatically translate programs in our language to such games.

\item The key challenge to analyze such game models automatically is to tackle the state-space 
explosion.
While several abstraction techniques have been considered for programs~\cite{queille1982specification,godefroid1996partial,burch1992symbolic}, 
they do not work for game-theoretic models with quantitative objectives.
We present an approach based on interval-abstraction for reducing the states, establish soundness of our abstraction, and present a refinement process. This is our core technical contribution.

\item We present experimental results on several classic real-world smart contracts.
We show that our approach can handle contracts that otherwise give rise to 
games with up to $10^{23}$ states. While special cases of concurrent games (namely, turn-based games) have been studied in verification and reactive synthesis, there are no practical methods to solve general concurrent quantitative games. To the the best of our knowledge, there are no tools to solve quantitative concurrent games other than academic examples of few states, and we present the first practical method to solve quantitative concurrent games that scales to real-world smart contract analysis.

\end{enumerate}
In summary, our contributions range from (i)~modeling of smart contracts as state-based games, 
to (ii)~an abstraction-refinement approach to solve such games, to (iii)~experimental results 
on real-world smart contracts.

\smallskip\noindent{\em Organization.}
We start with an overview of smart contracts in Section~\ref{sec:smart_contracts}.
Our programming language is introduced in Section~\ref{sec:prog_lang} along with implementations of real-world contracts in this language.
We then present state-based concurrent games and 
translation of contracts to games in Section~\ref{sec:games}.
The abstraction-refinement methodology for games is presented in Section~\ref{sec:abstract} followed by experimental results in 
Section~\ref{sec:exp}.
Section~\ref{sec:comparison_new} presents a comparison with related work and Section~\ref{sec:conclusion} concludes the paper with suggestions for future research.

\newcommand{\programVar}[1]{\mathit{#1}}
\vspace{-1em}
\section{Background on Ethereum smart contracts} \label{sec:smart_contracts}
\vspace{-0.5em}
\subsection{Programmable smart contracts}
\vspace{-0.5em}
Ethereum~\cite{buterin2013ethereum} is a decentralized virtual machine, which runs programs called
contracts.
Contracts are written in a Turing-complete
bytecode language, called Ethereum Virtual Machine (EVM) bytecode~\cite{wood2014ethereum}.
A contract is invoked by calling one of its functions, where each function is defined by a sequence of instructions.
The contract maintains a persistent internal state and can receive (transfer) currency from (to) users and other contracts.
Users send transactions to the Ethereum network to invoke functions.
Each transaction may contain input parameters for the contract and an associated monetary amount, possibly $0$, which is transferred from the user to the contract.

Upon receiving a transaction, the contract collects the money sent to it, executes a function according to input parameters, and updates its internal state.
All transactions are recorded on a decentralized ledger, called blockchain. A sequence of transactions that begins from the creation of the network uniquely determines the state of each contract and balances of users and contracts.
The blockchain does not rely on a trusted central authority, rather, each transaction is processed by a large network
of mutually untrusted peers called miners.
Users constantly broadcast transactions to the network.
Miners add transactions to the blockchain via a proof-of-work consensus protocol~\cite{nakamoto2008bitcoin}.

We illustrate contracts by an example (Figure~\ref{fig:satcontract}) which implements a contract that rewards users who solve a satisfiability problem.
Rather than programming it directly as EVM bytecode, we use Solidity, a widely-used programming language which compiles into EVM bytecode~\cite{solidity}.
This contract has one variable $\programVar{balance}$.
Users can send money by calling $\programVar{deposit}$, and the $\programVar{balance}$ is updated according to the sent amount, which is specified by $\programVar{msg.value}$ keyword.
Users submit a solution by calling $\programVar{submitSolution}$ and giving values to input parameters $\programVar{a},\programVar{b},\programVar{c}$ and $\programVar{d}$.
If the input is a valid solution, the user who called the function, denoted by the keyword $\programVar{msg.sender}$, is rewarded by $\programVar{balance}$.
We note that when several users submit valid solutions, only the \emph{first} user will be paid.
Formally, if two users $A$ and $B$ submit transactions $t_A$ and $t_B$ to the contract that invoke $\programVar{submitSolution}$ and have valid input parameters, then user $A$ is paid if and only if $t_A$ appears in the blockchain before $t_B$.

\begin{figure}
	\vspace{-8mm}
\begin{lstlisting}[language=solidity]
contract SAT {
    uint balance;
    function deposit() payable {
        balance += msg.value;
    }  
    function submitSolution( bool a, bool b, bool c, bool d ) {
        if( (a || b || !c) && (!a || c || !d) {
            msg.sender.send(balance);
            balance = 0;
        }
    }}
\end{lstlisting}
\vspace{-5mm}
\caption{Smart contract that rewards users satisfying 
$(a \vee b \vee \neg c) \wedge (\neg a \vee c \vee \neg d)$.
}
\label{fig:satcontract}
\vspace{-5mm}
\end{figure}

\smallskip\noindent{\em Subtleties.}
In this work, for simplicity, we ignore some details in the underlying protocol of Ethereum smart contract.
We briefly describe these details below:
\begin{itemize}
\item \emph{Transaction fees.} In exchange for including her transactions in the blockchain, a user pays transaction fees to the miners, proportionally to the execution time of her transaction.
This fact could slightly affect the monetary analysis of the user gain, but could also introduce bugs in a program, as there is a bound on execution time that cannot be exceeded. Hence, it is possible that some functions could never be called, or even worse, a user could actively give input parameters that would prevent other users from invoking a certain function.

\item \emph{Recursive invocation of contracts.}
A contract function could invoke a function in another contract, which in turn can have a call to the original contract.
The underling Ethereum semantic in recursive invocation was the root cause for the notorious DAO hack~\cite{DAO_analysis}.
 
\item \emph{Behavior of the miners.}
Previous works have suggested that smart contracts could be implemented to encourage miners to deviate from their honest behavior~\cite{teutschcryptocurrencies}. 
This could in theory introduce bugs into a contract, e.g., a contract might give unfair advantage for a user who is a big miner.

\end{itemize}
\vspace{-1em}
\subsection{Tokens and user utility}
\vspace{-0.5em}
A user's utility is determined by the Ether she spends and receives, but could also be affected by the state of the contract.
Most notably, smart contracts are used to issue \emph{tokens}, which can be viewed as a stake in a company or an organization, in return to an Ether (or tokens) investment (see an example in Figure~\ref{fig:token}).
These tokens are \emph{transferable} among users and are traded in exchanges in return to Ether, Bitcoin and Fiat money.
At the time of writing, smart contracts instantiate tokens worth billions of dollars~\cite{EtherscanTokens}.
Hence, gaining or losing tokens has clear utility for the user.
At a larger scope, user utility could also be affected by more abstract storage changes. Some users would be willing to pay to have a contract declare them as Kings of Ether~\cite{king}, while others could gain from registering their domain name in a smart contract storage~\cite{ens}.
In the examples provided in this work we mainly focus on utility that arises from Ether, tokens and the like.
However, our approach is general and can model any form of utility by introducing auxiliary utility variables and definitions.

\begin{figure}
	\vspace{-7mm}
\begin{lstlisting}[language=solidity]
contract Token {
    mapping(address=>uint) balances;
    function buy() payable {
        balances[msg.sender] += msg.value;
    }
    function transfer( address to, uint amount ) {
        if(balances[msg.sender]>=amount) {
            balances[msg.sender] -= amount;
            balances[to] += amount;
    }}}
\end{lstlisting}
\vspace{-4mm}
\caption{Token contract example.}
\label{fig:token}
\vspace{-7mm}
\end{figure}

\vspace{-1em}
\section{Programming Language for Smart Contracts} \label{sec:prog_lang} \label{SEC:PROG_LANG}
\vspace{-0.5em}
In this section we present our programming language for smart contracts that supports
concurrent interactions between parties.
A party denotes an agent that decides to interact with the contract.
A contract is a tuple $C = (N, I, M, R, X_0, F, T)$ where $X := N \cup I \cup M$ is a set of variables, 
$R$ describes  the range of values that can be stored in each variable, $X_0$ is the initial values stored in variables, 
$F$ is a list of functions and $T$ describes for each function, the time segment in which it can be invoked. 
We now formalize these concepts.

\smallskip\noindent{\em Variables.} There are three distinct and disjoint types of variables in $X$:

\begin{itemize}
	\item $N$ contains ``numeric'' variables that can store a single integer.

	\item $I$ contains ``identification'' (``id'') variables capable of pointing to a party in the contract by her address or storing $\hich$. 
	The notion of ids is quite flexible in our approach: The only dependence on ids is that they should be distinct and an id should not act on behalf of another id. 
	We simply use different integers to denote distinct ids and assume that a ``faking of identity'' does not happen. In Ethereum this is achieved by digital signatures. 

	\item $M$ is the set of ``mapping'' variables. Each $m \in M$ maps parties to integers.
\end{itemize}

\smallskip\noindent{\em Bounds and Initial values.} The tuple $R = (\Rdown, \Rup)$ where $\Rdown, \Rup: N \cup M \rightarrow \mathbb{Z}$ represent lower and upper bounds for integer values that can be stored in a variable. 
For example, if $n \in N$, then $n$ can only store integers between $\Rdown(n)$ and $\Rup(n)$. Similarly, if $m \in M$ is a mapping and $i \in I$ stores an address to a party in the contract, then $m\left[i\right]$ can save integers between $\Rdown(m)$ and $\Rup(m)$. The function $X_0 : X \rightarrow \mathbb{Z} \cup \{\hich\}$ assigns an initial value to every variable. The assigned value is an integer in case of numeric and mapping variables, i.e., a mapping variable maps everything to its initial value by default. Id variables can either be initialized by $\hich$ or an id used by one of the parties.

\smallskip\noindent{\em Functions and Timing.} The sequence $F = <f_1, f_2, \ldots, f_n>$ is a list of functions and $T = (\Tdown, \Tup)$, where $\Tdown, \Tup: F \rightarrow \mathbb{N}$. The function $f_i$ can only be invoked in time-frame $T(f_i) = \left[ \Tdown(f_i), \Tup(f_i) \right]$. The contract uses a global clock, for example the current block number in the blockchain, to keep track of time. 

Note that we consider a single contract, and interaction between multiple contracts is a subject of future work.
\vspace{-1em}
\subsection{Syntax}\label{sec:syntax}
\vspace{-0.5em}
We provide a simple overview of our contract programming language. Our language is syntactically similar to Solidity and a translation mechanism for different aspects is discussed in Section~\ref{sec:translation}. An example contract, modeling a game of rock-paper-scissors, is given in Figure \ref{prog:rps}. Here, a party, called \texttt{issuer} has issued the contract and taken the role of \texttt{Alice}. Any other party can join the contract by registering as \texttt{Bob} and then playing rock-paper-scissors. To demonstrate our language, we use a bidding mechanism. A more exact treatment of the syntax using a formal grammar can be found in Appendix~\ref{app:syntax}. 

\begin{figure}[!tbh]
	\noindent\begin{minipage}{.45\textwidth}
		\begin{lstlisting}[language=ours]
		(0) contract RPS {
		map Bids[0, 100] = 0;
		id Alice = issuer;
		id Bob = null;
		numeric played[0,1] = 0;
		numeric AliceWon[0,1] = 0;
		numeric BobWon[0,1] = 0;
		numeric bid[0, 100] = 0;
		numeric AlicesMove[0,3] = 0; 
		numeric BobsMove[0,3] = 0; 
		//0 denotes no choice,
		//1 rock, 2 paper, 
		//3 scissors
				
		(1) function registerBob[1,10]
			   (payable bid : caller) {
		(2)	   if(Bob==null) {
		(3)		   Bob = caller;
		(4)	     Bids[Bob]=bid;
		       }
			     else{
		(5)	     payout(caller, bid);
			     }
		(6) }		
		(7) function play[11, 15]
			  (AlicesMove:Alice = 0,
			  BobsMove:Bob = 0,
			  payable Bids[Alice]: Alice){   
		(8)  if(played==1)
		(9)	    return;
			   else
		(10)	  played = 1;
	\end{lstlisting}
	\end{minipage}
	\noindent\begin{minipage}{.45\textwidth}
		\begin{lstlisting}[language=ours]
		(11) if(BobsMove==0 and AlicesMove!=0)
		(12)	    AliceWon = 1;
		(13) else if(AlicesMove==0 and BobsMove!=0)
		(14)	    BobWon = 1;
		(15) else if(AlicesMove==0 and BobsMove==0)
			    {
		(16)	    AliceWon = 0;
		(17)	    BobWon = 0;
			    }
		(18) else if(AlicesMove==BobsMove+1 or
				    AlicesMove==BobsMove-2)
		(19)	    AliceWon = 1;
			    else
		(20)	    BobWon = 1;
		(21) }	
		
		(22) function getReward[16,20]() {
		(23)	if(caller==Alice and AliceWon==1
			    or caller==Bob and BobWon==1)
					{
		(24)		payout(caller, Bids[Alice] + Bids[Bob]);
		(25)		Bids[Alice] = 0;
		(26)		Bids[Bob] = 0;
					}
	 	(27) }
		    }
		\end{lstlisting}
	\end{minipage}
	\vspace{-3mm}
	\caption{A rock-paper-scissors contract.}
	\label{prog:rps}
	\vspace{-5mm}
\end{figure}

\smallskip\noindent{\em Declaration of Variables.} The program begins by declaring variables\footnote{For simplicity, we demonstrate our method with global variables only. However, the method is applicable to general variables as long as their ranges are well-defined at each point of the program.}, their type, name, range and initial value. For example, \texttt{Bids} is a map variable that assigns a value between $0$ and $100$ to every id. This value is initially $0$. Line numbers (labels) are defined in Section \ref{sec:semantics} below and are not part of the syntax.

\smallskip\noindent{\em Declaration of Functions.} After the variables, the functions are defined one-by-one. Each function begins with the keyword \function followed by its name and the time interval in which it can be called by parties. Then comes a list of input parameters. Each parameter is of the form \texttt{ variable : party } which means that the designated party can choose a value for that variable. The chosen value is required to be in the range specified for that variable. The keyword \kw{caller}
denotes the party that has invoked this function and \kw{payable} signifies that the party should not only decide a value, but must also pay the amount she decides. For example, \texttt{registerBob} can be called in any time between $1$ and $10$ by any of the parties. At each such invocation the party that has called this function must pay some amount which will be saved in the variable \texttt{bid}. After the decisions and payments are done, the contract proceeds with executing the function.

\smallskip\noindent{\em Types of Functions.} There are essentially two types of functions, depending on their parameters. \emph{One-party functions}, such as \texttt{registerBob} and \texttt{getReward} require parameters from \kw{caller} only, while \emph{multi-party functions}, such as \texttt{play} ask several, potentially different, parties for input. In this case all parties provide their input decisions and payments concurrently and without being aware of the choices made by other parties, also a default value is specified for every decision in case a relevant party does not take part. 

\smallskip\noindent{\em Summary.} Putting everything together, in the contract specified in Figure \ref{prog:rps}, any party can claim the role of Bob between time $1$ and time $10$ by paying a bid to the contract, if the role is not already occupied. Then at time $11$ one of the parties calls \texttt{play} and both parties have until time $15$ to decide which choice (rock, paper, scissors or none) they want to make. Then the winner can call \texttt{getReward} and collect her prize.

\vspace{-1em}
\subsection{Semantics} \label{sec:semantics}
\vspace{-0.5em}
In this section we present the details of the semantics. 
In our programming language there are several key aspects which are 
non-standard in programming languages, such as the notion of time progress,
concurrency, and interactions of several parties. 
Hence we present a detailed description of the semantics.
We start with the requirements.

\smallskip\noindent\emph{Requirements.} 
In order for a contract to be considered valid, other than following the syntax rules, a few more requirements must be met, which are as follows:

\begin{itemize}
\item We assume that no division by zero or similar undefined behavior happens.
\item To have a well-defined message passing, we also assume that no multi-party function has an associated time interval intersecting that of another function. 
\item Finally, for each non-id variable $v$, it must hold that $\Rdown(v) \leq X_0(v) \leq \Rup(v)$ and similarly, for every function $f_i$, we must have $\Tdown(f_i) < \Tup(f_i)$.
\end{itemize} 

\smallskip\noindent\emph{Overview of time progress.} Initially, the time is $0$. Let $F_t$ be the set of functions executable at time $t$, i.e., $F_t = \{f_i \in F \vert t \in T(f_i) \}$, then 
$F_t$ is either empty or contains one or more one-party functions or consists of a single multi-party function. We consider the following cases:

\begin{itemize}
\item {\em $F_t$ empty.}
If $F_t$ is empty, then nothing can happen until the clock ticks.

\item \emph{Execution of one-party functions.} If $F_t$ contains one or more one-party functions, then each of the parties can call any subset of these functions at time $t$. If there are several calls at the same time, the contract might run them in any order.
  While a function call is being executed, all parties are able to see the full state of the contract, and can issue new calls. When there are no more requests for function calls, the clock ticks and the time is increased to $t+1$. When a call is being executed and is at the beginning part of the function, its caller can send messages or payments to the contract. Values of these messages and payments will then be saved in designated variables and the execution continues. If the caller fails to make a payment or specify a value for a decision variable or if her specified values/payments are not in the range of their corresponding variables, i.e.~they are too small or too big, the call gets canceled and the contract reverts any changes to variables due to the call and continues as if this call had never happened. 

\item\emph{Execution of multi-party functions.} If $F_t$ contains a single multi-party function $f_i$ and $t < \Tup(f_i)$, then any party can send messages and payments to the contract to specify values for variables that are designated to be paid or decided by her. These choices are hidden and cannot be observed by other participants. She can also change her decisions as many times as she sees fit. The clock ticks when there are no more valid requests for setting a value for a variable or making a payment. This continues until we reach time $\Tup(f_i)$. At this time parties can no longer change their choices and the choices become visible to everyone. The contract proceeds with execution of the function. If a party fails to make a payment/decision or if $\hich$ is asked to make a payment or a decision, default behavior will be enforced. Default value for payments is $0$ and default behavior for other variables is defined as part of the syntax. For example, in function \texttt{play} of Figure \ref{prog:rps}, if a party does not choose, a default value of $0$ is enforced and given the rest of this function, this will lead to a definite loss. 
\end{itemize} 

Given the notion of time progress we proceed to formalize the notion of ``runs'' of the contract. 
This requires the notion of labels, control-flow graphs, valuations, and states, which we describe below.

\smallskip\noindent\emph{Labels.} Starting from $0$, we give the contract, beginning and end points of every function, and every command a label. The labels are given in order of appearance. As an example, see the labels in parentheses in Figure \ref{prog:rps}.

\smallskip\noindent\emph{Entry and Exit Labels.} We denote the first (beginning point) label in a function $f_i$ by $\avval_i$ and its last (end point) label by $\akhar_i$. 

\smallskip\noindent \emph{Control Flow Graphs (CFGs).} We define the control flow graph $CFG_i$ of the function $f_i$ in the standard manner, i.e.~$CFG_i = (V, E)$, where there is a vertex corresponding to every labeled entity inside $f_i$. We do not distinguish an entity, its label and its corresponding vertex. Each edge $e \in E$ has a condition $\cond(e)$ which is a boolean expression that must be true when traversing that edge. For example, Figure \ref{fig:cfgplay} is an illustration of the control flow graph of function \texttt{play} in our example contract. For a more formal treatment see Appendix \ref{app:semantics}.

\begin{figure}[h]
	\vspace{-4mm}
	\begin{minipage}{0.30\textwidth}
			\includegraphics[scale=0.03]{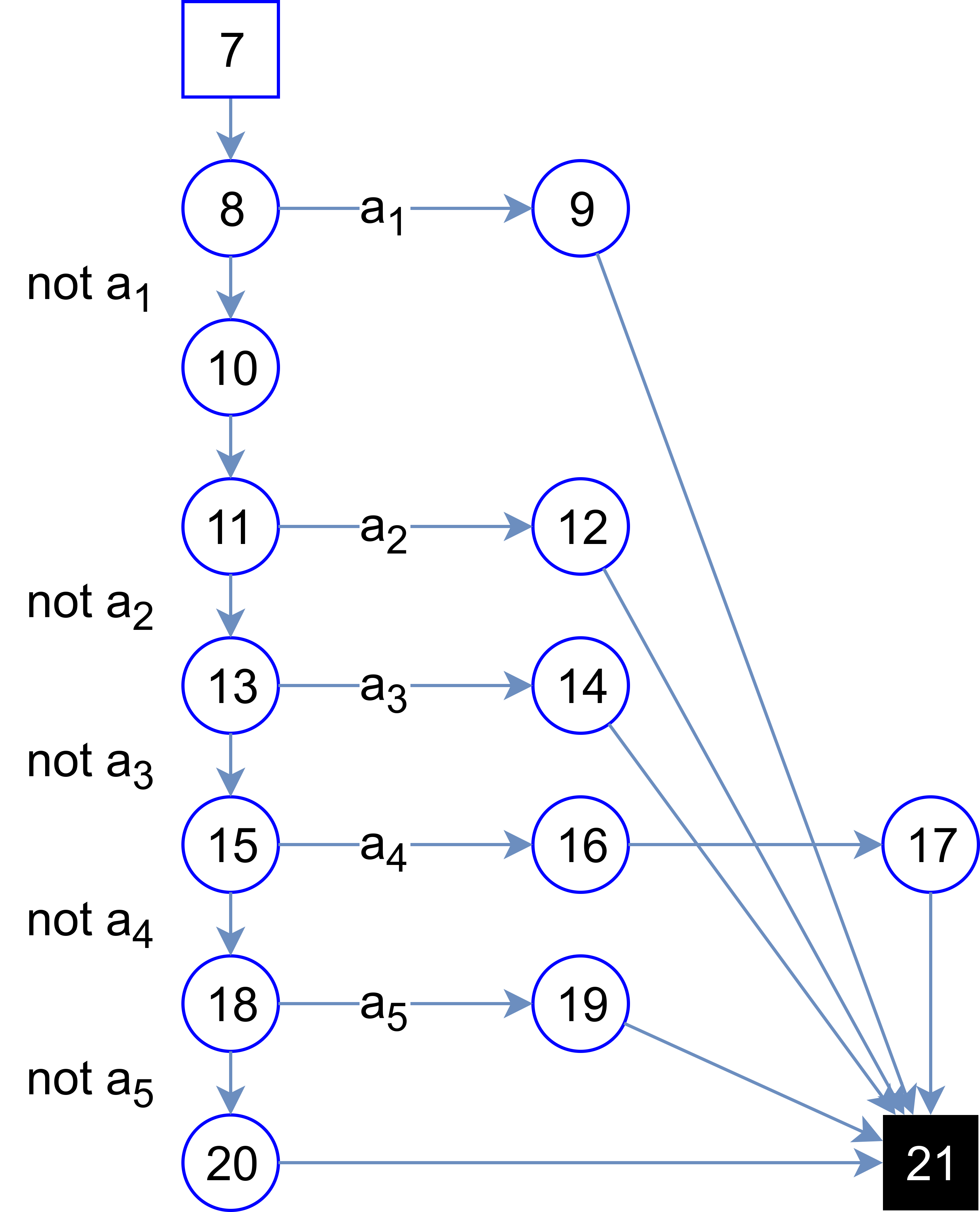}
	\end{minipage}
	\begin{minipage}{0.40\textwidth}
		\begin{tabular}{ll}
		$\text{a}_1 : $ & \texttt{played==1}\\
		$\text{a}_2 : $ & \texttt{BobsMove==0 and AlicesMove!=0}\\
		$\text{a}_3 : $ & \texttt{AlicesMove==0 and BobsMove!=0}\\
		$\text{a}_4 : $ & \texttt{AlicesMove==0 and BobsMove==0}\\
		$\text{a}_5 : $ & \texttt{AlicesMove==BobsMove+1} \\
		 & \texttt{or} \\ 
		 & \texttt{AlicesMove==BobsMove-2}\\
		\end{tabular}
	\end{minipage}
	\vspace{-2mm}
	\caption{Control Flow Graph of \texttt{play()} (left) and its edge conditions (right)}
	\label{fig:cfgplay}
	\vspace{-5mm} 
\end{figure}

\smallskip\noindent \emph{Valuations.} A valuation is a function $\val$, assigning a value to every variable. Values for numeric variables must be integers in their range, values for identity variables can be party ids or $\hich$ and a value assigned to a map variable $m$ must be a function $\val(m)$ such that for each identity $i$, we have $\Rdown(m) \leq \val(m)(i) \leq \Rup(m)$. Given a valuation, we extend it to expressions containing mathematical operations in the straight-forward manner.

\smallskip\noindent \emph{States.} A state of the contract is a tuple $s = (t, b, l, \val, c)$, where $t$ is a time stamp, $b \in \mathbb{N} \cup \{0\}$ is the current balance of the contract, i.e., the total amount of payment to the contract minus the total amount of payouts, $l$ is a label (that is being executed), $\val$ assigns values to variables and $c \in P \cup \{ \perp \}$, is the caller of the current function. $c = \perp$ corresponds to the case where the caller is undefined, e.g., when no function is being executed. We use $S$ to denote the set of all states that can appear in a run of the contract as defined below.

\smallskip\noindent \emph{Runs.} A run $\rho$ of the contract is a finite sequence $\left\{\rho_j = (t_j, b_j, l_j, \val_j, c_j) \right\}_{j=0}^{r}$ of states, starting from $(0, 0, 0, X_0, \perp)$, that follows all rules of the contract and ends in a state with time-stamp $t_{r} > \max_{f_i} \Tup(f_i)$. These rules must be followed when switching to a new state in a run: 
\begin{itemize}
	\item The clock can only tick when there are no valid pending requests for running a one-party function or deciding or paying in multi-party functions.
	\item Transitions that happen when the contract is executing a function must follow its control flow graph and update the valuation correctly.
	\item No variable can contain an out-of-bounds value. If an overflow or underflow happens, the closest possible value will be saved. This rule also ensures that the contract will not create new money, given that paying more than the current balance of the contract results in an underflow.
	\item Each party can call any set of the functions at any time.
\end{itemize} 
This definition is formalized in Appendix \ref{app:run}.

\begin{remark}
	Note that in our semantics each function body completes its execution in a single tick of the clock. However, ticks might contain more than one function call and execution.
\end{remark}

\smallskip\noindent \emph{Run prefixes.} We use $\Eta$ to mean the set of all prefixes of runs and denote the last state in $\eta \in \Eta$ by $\ENd(\eta)$. A run prefix $\eta'$ is an extension of $\eta$ if it can be obtained by adding one state to the end of $\eta$.

\smallskip\noindent\emph{Probability Distributions.} Given a finite set $\mathcal{X}$, a probability distribution on $\mathcal{X}$ is a function $\delta: \mathcal{X} \rightarrow \left[ 0,1 \right]$ such that $\sum_{x \in \mathcal{X}} \delta(x) = 1$. Given such a distribution, its support, $\Supp(\delta)$, is the set of all $x \in \mathcal{X}$ such that $\delta(x) > 0$. We denote the set of all probability distributions on $\mathcal{X}$ by $\Delta(\mathcal{X})$. 

Typically for programs it suffices to define runs for the semantics. However, given that there are several parties in
contracts, their semantics depends on the possible choices of the parties. 
Hence we need to define policies for parties, and such policies will define probability distribution over runs,
which constitute the semantics for contracts.
To define policies we first define moves.

\smallskip\noindent \emph{Moves.} We use $\moves$ for the set of all moves. The moves that can be taken by parties in a contract can be summarized as follows: 
\begin{itemize}
	\item Calling a function $f_i$, we denote this by $\Movecall(f_i)$.
	\item Making a payment whose amount, $y$ is saved in $x$, we denote this by $\Movepay(x, y)$.
	\item Deciding the value of $x$ to be $y$, we denote this by $\Movedecide(x, y)$.
	\item Doing none of the above, we denote this by $\naught$.
%	\item Unmasking her chosen value for $x$, we denote this by $unmask(x)$.
\end{itemize} 

\smallskip\noindent \emph{Permitted Moves.} We define $P_i : S \rightarrow \moves$, so that $P_i(s)$ is the set of permitted moves for the party with identity $i$ if the contract is in state $s = (t, b, l, \val, p_j)$. It is formally defined as follows:
\begin{itemize}
	\item If $f_k$ is a function that can be called at state $s$, then $\Movecall(f_k) \in P_i(s)$.
	\item If $l = \avval_q$ is the first label of a function $f_q$ and $x$ is a variable that can be decided by $i$ at the beginning of the function $f_q$, then $\Movedecide(x, y) \in P_i(s)$ for all permissible values of $y$. Similarly if $x$ can be paid by $i$, $\Movepay(x,y) \in P_i(s)$.
	\item $\naught \in P_i(s)$.
\end{itemize}

\smallskip\noindent \emph{Policies and Randomized Policies.} A policy $\pi_i$ for party $i$ is a function $\pi_i : \Eta \rightarrow A$, such that for every $\eta \in \Eta$, $\pi_i(\eta) \in P_i(\ENd(\eta))$. 
Intuitively, a policy is a way of deciding what move to use next, given the current run prefix.
A policy profile $\pi = (\pi_i)$ is a sequence assigning one policy to each party $i$.
The policy profile $\pi$ defines a unique run $\pho^\pi$ of the contract which is obtained when parties choose their moves according to $\pi$.
A randomized policy $\xi_i$ for party $i$ is a function $\xi_i : \Eta \rightarrow \Delta(\moves)$, such that $\Supp(\xi_i(s)) \subseteq P_i(s)$. 
A randomized policy assigns a probability distribution over all possible moves for party $i$ given the current run prefix of the contract, then the party can follow it by choosing a move randomly according to the distribution.
We use $\Xi$ to denote the set of all randomized policy profiles, $\Xi_i$ for randomized policies of $i$ and $\Xi_{-i}$ to denote the set of randomized policy profiles for all parties except $i$.
A randomized policy profile $\xi$ is a sequence $(\xi_i)$ assigning one randomized policy to each party.
Each such randomized policy profile induces a unique probability measure on the set of runs, which is denoted as $\Prob^{\xi}\left[\cdot\right]$. 
We denote the expectation measure associated to $\Prob^{\xi}\left[\cdot\right]$  by $\E^{\xi}\left[\cdot\right]$.

\vspace{-1em}
\subsection{Objective function and values of contracts}

As mentioned in the introduction we identify expected payoff as the canonical property for contracts.
The previous section defines expectation measure given randomized policies as the basic semantics.
Given the expected payoff, we define values of contracts as the worst-case guaranteed payoff for 
a given party.
We formalize the notion of objective function (the payoff function).

\smallskip\noindent \emph{Objective Function.} An objective $o$ for a party $p$ is in one of the following forms: 
\begin{itemize}
	\item  $(p^+ - p^-)$, where $p^+$ is the total money received by party $p$ from the contract (by ``payout'' statements) and $p^-$ is the total money paid by $p$ to the contract (as ``payable'' parameters). 
	\item An expression containing mathematical and logical operations (addition, multiplication, subtraction, integer division, and, or, not) and variables chosen from the set $N \cup \{m\left[i\right] \vert m \in M, i \in I \}$. Here $N$ is the set of numeric variables, $m[i]$'s are the values that can be saved inside maps.\footnote{We are also assuming, as in many programming languages, that $\textsc{True} = 1$ and $\textsc{False} = 0$.}  
	\item A sum of the previous two cases.
\end{itemize} 
Informally, $p$ is trying to choose her moves so as to maximize $o$.

\smallskip\noindent \emph{Run Outcomes.} Given a run $\pho$ of the program and an objective $o$ for party $p$, the outcome $\kappa(\pho, o, p)$ is the value of $o$ computed using the valuation at $\ENd(\pho)$ for all variables and accounting for payments in $\pho$ to compute $p^+$ and $p^-$.

\smallskip\noindent \emph{Contract Values.} 
Since we consider worst-case guaranteed payoff, we consider that there 
is an objective $o$ for a single party $p$ which she tries to maximize and all other parties are adversaries who aim to minimize $o$. 
Formally, given a contract $C$ and an objective $o$ for party $p$, we define the value of contract as:
$$
\mathsf{V}(C, o, p) := \sup_{\xi_p \in \Xi_p} \inf_{\xi_{-p} \in \Xi_{-p}} \E^{(\xi_p, \xi_{-p})} \left[\kappa(\pho, o, p) \right],
$$
This corresponds to $p$ trying to maximize the expected value of $o$ and all other parties maliciously colluding to minimize it.
In other words, it provides the worst-case guarantee for party~$p$, irrespective of the behavior of the other parties,
which in the worst-case is adversarial to party~$p$.

\vspace{-1em}
\subsection{Examples}\label{sec:examples}
\vspace{-0.5em}

One contribution of our work is to present the simplified programming language, and to show 
that this simple language can express several classical smart contracts.
To demonstrate the applicability, we present several examples of classical smart contracts in 
this section. 
In each example, we present a contract and  a ``buggy'' implementation of the same contract that has a different value. 
In Section~\ref{sec:exp} we show that our automated approach to analyze the contracts can compute contract values with 
enough precision to differentiate between the correct and the buggy implementation.
All of our examples are motivated from well-known bugs that have happened in real life in Ethereum.
\vspace{-1.5em}
\subsubsection{Rock-Paper-Scissors.}
Let our contract be the one specified in Figure \ref{prog:rps} and assume that we want to analyze it from the point of view of the issuer $p$. Also, let the objective function be 
$
\left( p^+ - p^- + 10 \cdot \texttt{AliceWon} \right).
$
Intuitively, this means that winning the rock-paper-scissors game is considered to have an additional value of $10$, other than the spending and earnings. The idea behind this is similar to the case with chess tournaments, in which players not only win a prize, but can also use their wins to achieve better ``ratings", so winning has extra utility.

A common bug in writing rock-paper-scissors is allowing the parties to move sequentially, rather than concurrently \cite{delmolino2015step}. If parties can move sequentially and the issuer moves after \texttt{Bob}, then she can ensure a utility of $10$, i.e.~her worst-case expected reward is $10$. However, in the correct implementation as in Figure~\ref{prog:rps}, the best strategy for both players is to bid $0$ and then Alice can win the game with probability $1/3$ by choosing each of the three options with equal probability. Hence, her worst-case expected reward is $10/3$.
\vspace{-1em}
\subsubsection{Auction.}
\vspace{-0.5em}
Consider an open auction, in which during a fixed time interval everyone is allowed to bid for the good being sold and everyone can see others' bids. When the bidding period ends a winner emerges and every other participant can get their money back. Let the variable \texttt{HighestBid} store the value of the highest bid made at the auction. Then for a party $p$, one can define the objective as:
$$
p^+ - p^- + (\texttt{Winner==}p)\times \texttt{HighestBid}.
$$
This is of course assuming that the good being sold is worth precisely as much as the highest bid. A correctly written auction should return a value of $0$ to every participant, because those who lose the auction must get their money back and the party that wins pays precisely the highest bid. The contract in Figure \ref{prog:auction} (left) is an implementation of such an auction. However, it has a slight problem. The function bid allows the winner to reduce her bid. This bug is fixed in the contract on the right.
\begin{figure}[h]
\vspace{-6mm}
	\noindent\begin{minipage}{.45\textwidth}
		\begin{lstlisting}[language=ours]
		contract BuggyAuction {
		map Bids[0,1000] = 0;
		numeric HighestBid[0,1000] = 0;
		id Winner = null; 
		numeric bid[0,1000] = 0;
		
		function bid[1,10]
		(payable bid : caller) {
		   payout(caller, Bids[caller]);
		   Bids[caller]=bid;
		   if(bid>HighestBid)
		   {
			    HighestBid = bid;
			    Winner = caller;
		   }
		}
		
		function withdraw[11,20]()
		{
			if(caller!=Winner)
			{
				payout(caller, Bids[caller]);
				Bids[caller]=0;
			}
		}}		
		\end{lstlisting}
	\end{minipage}
	\noindent\begin{minipage}{.45\textwidth}
		\begin{lstlisting}[language=ours]
		contract Auction {
		map Bids[0,1000] = 0;
		numeric HighestBid[0,1000] = 0;
		id Winner = null; 
		numeric bid[0,1000] = 0;
		
		function bid[1,10]
		(payable bid : caller) {
			if(bid<Bids[caller])
				return;
			payout(caller, Bids[caller]);
			Bids[caller]=bid;
			if(bid>HighestBid)
			{
				HighestBid = bid;
				Winner = caller;
			}
		}
		
		function withdraw[11,20]()
		{
			if(caller!=Winner)
			{
			payout(caller, Bids[caller]);
			Bids[caller]=0;
			}
		}}
		\end{lstlisting}
	\end{minipage}
	\vspace{-5mm}
	\caption{A buggy auction contract (left) and its fixed version (right).}
	\label{prog:auction}
	\vspace{-5mm}
\end{figure} 

\vspace{-1.5em}
\subsubsection{Three-Way Lottery.} Consider a three-party lottery contract issued by $p$ as in Figure \ref{prog:lottery} (left). Note that division is considered to be integer division, discarding the remainder. The other two players can sign up by buying tickets worth $1$ unit each. Then each of the players is supposed to randomly and uniformly choose a nonce. A combination of these nonces produces the winner with equal probability for all three parties. If a person does not make a choice or pay the fees, she will certainly lose the lottery. The rules are such that if the other two parties choose the same nonce, which is supposed to happen with probability $\frac{1}{3}$, then the issuer wins. Otherwise the winner is chosen according to the parity of sum of nonces. This gives everyone a winning probability of $\frac{1}{3}$ if all sides play uniformly at random. However, even if one of the sides refuses to play uniformly at random, the resulting probabilities of winning stays the same because each side's probability of winning is independent of her own choice assuming that others are playing randomly.

\begin{figure}[!htbp]
	\vspace{-1cm}
	\noindent\begin{minipage}{.55\textwidth}

		\begin{lstlisting}[language=ours]
		contract BuggyLottery {
		id issuer = p;
		id Alice = null;
		id Bob = null;
		id Winner = null;
		numeric deposit[0,1] = 0;
		numeric AlicesChoice[0,3] = 0;
		numeric BobsChoice[0,3] = 0;
		numeric IssuersChoice[0,3] = 0;
		numeric sum[0,9]=0;
		
		function buyTicket[1,10]
		  (payable deposit:caller)
		  {
		    if(deposit!=1) return;
		    if(Alice==null) Alice=caller;
		    else if(Bob==null) Bob=caller;
		    else payout(caller, deposit);	
		  }
		
		function play[11,20]
		(AlicesChoice:Alice = 0,
		BobsChoice:Bob = 0,
		IssuersChoice:issuer = 0,
		payable deposit:issuer)
		{
		if(AlicesChoice==0 or BobsChoice==0)
		  Winner = issuer;
		else if(IssuersChoice==0 or
		deposit==0)
		  Winner = Alice;
		else if(AlicesChoice==BobsChoice)
		  Winner = issuer;
		else
		{
		  sum= AlicesChoice+BobsChoice 
		  + IssuersChoice;
		  if(sum/2*2==sum)
		    Winner = Alice;
		  else
		    Winner = Bob;
		}
		}
		
		function withdraw[21,30]()
		{
		  if(caller==Winner)
		    payout(caller, 3);
		}}		
		\end{lstlisting}
	\end{minipage}
	\noindent\begin{minipage}{.45\textwidth}
		\begin{lstlisting}[language=ours]
		contract Lottery {
		id issuer = p;
		id Alice = null;
		id Bob = null;
		id Winner = null;
		numeric deposit[0,1] = 0;
		numeric AlicesChoice[0,3] = 0;
		numeric BobsChoice[0,3] = 0;
		numeric IssuersChoice[0,3] = 0;
		numeric sum[0,9]=0;
		
		function buyTicket[1,10]
		(payable deposit:caller)
		{
		  if(deposit!=1) return;
		  if(Alice==null) Alice=caller;
		  else if(Bob==null) Bob=caller;
		  else payout(caller, deposit);	
		}
		
		function play[11,20]
		(AlicesChoice:Alice = 0,
		BobsChoice:Bob = 0,
		IssuersChoice:issuer = 0,
		payable deposit:issuer)
		{
		if(AlicesChoice==0 or BobsChoice==0)
		  Winner = issuer;
		else if(IssuersChoice==0 or
		deposit==0)
		  Winner = Alice;
		else
		{
		  sum = AlicesChoice + BobsChoice 
		  + IssuersChoice;
		  if(sum/3*3==sum)
		    Winner = Alice;
		  else if(sum/3*3==sum-1)
		    Winner = Bob;
		  else
		    Winner = issuer;
		}	
		}
		
		function withdraw[21,30]()
		{
		  if(caller==Winner)
		    payout(caller, 3);
		}}		
		\end{lstlisting}
	\end{minipage}
	\vspace{-3mm}
	\caption{A buggy lottery contract (left) and its fixed version (right).}
	\label{prog:lottery}
\end{figure}

We assume that the issuer $p$ has objective $p^+ - p^-$. This is because the winner can take other players' money. In a bug-free contract we will expect the value of this objective to be $0$, given that winning has a probability of $\frac{1}{3}$. However, the bug here is due to the fact that other parties can collude. For example, the same person might register as both \texttt{Alice} and \texttt{Bob} and then opt for different nonces. This will ensure that the issuer loses. The bug can be solved as in the contract in Figure \ref{prog:lottery} (right). In that contract, one's probability of winning is $\frac{1}{3}$ if she honestly plays uniformly at random, no matter what other parties do.
 
\vspace{-1.5em}
\subsubsection{Token Sale.} Consider a contract that sells \emph{tokens} modeling some aspect of the real world, e.g.~shares in a company. At first anyone can buy tokens at a fixed price of $1$ unit per token. However, there are a limited number of tokens available and at most $1000$ of them are meant to be sold. The tokens can then be transferred between parties, which is the subject of our next example. For now, Figure \ref{prog:sale} (left) is an implementation of the selling phase. However, there is a big problem here. The problem is that one can buy any number of tokens as long as there is at least one token remaining. For example, one might first buy $999$ tokens and then buy another $1000$. If we analyze the contract from the point of view of a solo party $p$ with objective $\texttt{balance}[p]$, then it must be capped by $1000$ in a bug-free contract, while the process described above leads to a value of $1999$. The fixed contract is in Figure \ref{prog:sale} (right). This bug is inspired by a very similar real-world bug described in \cite{yaronaudit}.

\vspace{-5mm}
\subsubsection{Token Transfer.} Consider the same bug-free token sale as in the previous example, we now add a function for transferring tokens. An owner can choose a recipient and an amount less than or equal to her balance and transfer that many tokens to the recipient. Figure \ref{prog:tran} (left) is an implementation of this concept. Taking the same approach and objective as above, we expect a similar result. However, there is again an important bug in this code. What happens if a party transfers tokens to herself? She gets free extra tokens! This has been fixed in the contract on the right. This example models a real-world bug as in \cite{tokentransferbug}.
\begin{figure}[h]
	\vspace{-7mm}
	\noindent\begin{minipage}{.45\textwidth}
		\begin{lstlisting}[language=ours]
		contract BuggySale {
		map balance[0,2000] = 0;
		numeric remaining[0,2000] = 1000;
		numeric payment[0,2000] = 0;
		
		function buy[1,10]
		  (payable payment:caller)
		{
		  if(remaining<=0){
		    payout(caller, payment);
		    return;
		  }
		  balance[caller] += payment;
		  remaining -= payment;
		}}		
		\end{lstlisting}
	\end{minipage}
	\noindent\begin{minipage}{.45\textwidth}
		\begin{lstlisting}[language=ours]
		contract Sale {
		map balance[0,2000] = 0;
		numeric remaining[0,2000] = 1000;
		numeric payment[0,2000] = 0;
		
		function buy[1,10]
		  (payable payment:caller)
		{
		  if(remaining-payment<0){
		    payout(caller, payment);
		    return;
		  }
		  balance[caller] += payment;
		  remaining -= payment;
		}}
		\end{lstlisting}
	\end{minipage}
	\vspace{-5mm}
	\caption{A buggy token sale (left) and its fixed version (right).}
	\label{prog:sale}
	\vspace{-5mm}
\end{figure} 

\begin{figure}[h]
	\noindent\begin{minipage}{.55\textwidth}
		\begin{lstlisting}[language=ours]
		contract BuggyTransfer {
		map balance[0,2000] = 0;
		numeric remaining[0,2000] = 1000;
		numeric payment[0,2000] = 0;
		numeric amount[0,2000] = 0;
		numeric fromBalance[0,2000] = 0;
		numeric toBalance[0,2000] = 0;
		id recipient = null;
		
		function buy[1,10]...
		
		function transfer[1,10](
			recipient : caller
			amount : caller) {
			  fromBalance = balance[caller];
			  toBalance = balance[recipient];
			  if(fromBalance<amount)
				  return;
			  fromBalance -= amount;
			  toBalance += amount;
			  balance[caller] = fromBalance;
			  balance[recipient] = toBalance;
		  }}
		\end{lstlisting}
	\end{minipage}
	\noindent\begin{minipage}{.45\textwidth}
		\begin{lstlisting}[language=ours]
		contract Transfer {
		map balance[0,2000] = 0;
		numeric remaining[0,2000] = 1000;
		numeric payment[0,2000] = 0;
		numeric amount[0,2000] = 0;

		
		id recipient = null;
		
		function buy[1,10]...
		
		function transfer[1,10](
			recipient : caller
			amount : caller) {
			 
			  
			  if(balance[caller]<amount)
				  return;
			  balance[caller] -= amount;
			  balance[recipient] += amount;
			  
			  
		  }}
		\end{lstlisting}
	\end{minipage}
	\vspace{-4mm}
	\caption{A buggy transfer function (left) and its fixed version (right).}
	\label{prog:tran}
	\vspace{-4mm}
\end{figure}

\vspace{-1em}
\subsection{Translation to Solidity}\label{sec:translation}
\vspace{-0.5em}
In this section we discuss the problem of translating contracts from our programming language to Solidity, which is a widely-used language for programming contracts in Ethereum. 
There are two aspects in our language that are not automatically present in Solidity: (i)~the global clock, and (ii)~concurrent choices and payments by participants. 
We describe the two aspects below:
\begin{itemize} 

\item {\em Translation of Timing and the Clock.} The global clock can be modeled by the number of blocks in the blockchain. Solidity code is able to reference the blockchain. Given that a new block arrives roughly every $15$ to $20$ seconds, number of blocks that have been added to blockchain since the inception of the contract, or a constant multiple of it, can quantify passage of time.

\item {\em Translation of Concurrent Interactions.} Concurrent choices and payments can be implemented in Solidity using commitment schemes and digital signatures, which are standard tools in cryptography and cryptocurrencies. All parties first commit to their choice and then when they can no longer change it, unmask it. Commitment schemes can be extended to payments by requiring everyone to pay a fixed amount which is more than the value they are committing to and then returning the excess amount after unmasking.
\end{itemize} 
Hence contracts in our language can be automatically translated to Solidity.

\vspace{-1em}
\section{Bounded Analysis and Games} \label{sec:games} \label{SEC:GAMES}
\vspace{-0.5em}
Since smart contracts can be easily described in our programming language, 
and programs in our programming language can be translated to Solidity, 
the main aim to automatically compute values of contracts (i.e., compute 
guaranteed payoff for parties).
In this section, we introduce the bounded analysis problem for our programming 
language framework, and present concurrent games which is the underlying mathematical
framework for the bounded analysis problem.

\vspace{-1em}
\subsection{Bounded analysis}
\vspace{-0.5em}
As is standard in verification, we consider the bounded analysis problem, where 
the number of parties and the number of function calls are bounded.
In standard program analysis, bugs are often detected with a small number of processes, 
or a small number of context switches between concurrent threads. 
In the context of smart contracts, 
we analogously assume that the number of parties and function calls are bounded.

\smallskip\noindent{\em Contracts with bounded number of parties and function calls.}
Formally, a contract with bounded number of parties and function calls is as follows: 
\begin{itemize}
\item 
Let $C$ be a contract and $k \in \mathbb{N}$, we define $C_k$ as an equivalent contract that can have at most $k$ parties. 
This is achieved by letting $\parties = \{ \party_1, \party_2, \ldots, \party_k \}$ be the set of all possible ids in the contract. 
The set $\parties$ must contain all ids that are in the program source, therefore $k$ is at least the number of such ids. 
Note that this does not restrict that ids are controlled by unique users, and a real-life user can have several different ids.
We only restrict the analysis to bounded number of parties interacting with the smart contract.

\item To ensure runs are finite, number of function calls by each party is 
also bounded. 
Specifically, each party can call each function at most once during each time frame, i.e.~between two consecutive ticks of the clock. 
This closely resembles real-life contracts in which one's ability to call many functions is limited by the capacity of a block in the blockchain, 
given that the block must save all messages. For a more rigorous treatment see Appendix \ref{app:semantics}.
\end{itemize}

\vspace{-1em}
\subsection{Concurrent Games} 
\vspace{-0.45em}
The programming language framework we consider has interacting agents that act simultaneously,
and we have the program state. 
We present the mathematical framework of concurrent games, which are games played on finite 
state spaces with concurrent interaction between the players.

\smallskip\noindent\emph{Concurrent Game Structures.} \label{sec:concur_games}  \label{SEC:CONCUR_GAMES}
A concurrent two-player game structure is a tuple $G = (S, s_0, A, \Gamma_1, \Gamma_2, \delta)$, 
where $S$ is a finite set of states, $s_0 \in S$ is the start state, $A$ is a finite set of actions, 
$\Gamma_1, \Gamma_2 : S \rightarrow 2^A \setminus \emptyset$ such that $\Gamma_i$ assigns to each state $s \in S$, 
a non-empty set $\Gamma_i(s) \subseteq A$ of actions available to player $i$ at $s$, and finally 
$\delta: S \times A \times A \rightarrow S$ is a transition function that assigns to every state 
$s \in S$ and action pair $a_1 \in \Gamma_1(s), a_2 \in \Gamma_2(s)$ a successor state $\delta(s, a_1, a_2) \in S$.

\smallskip\noindent\emph{Plays and Histories.} The game starts at state $s_0$. At each state $s_i \in S$, player 1 chooses an action $a^i_1 \in \Gamma_1(s_i)$ and player 2 chooses an action $a_2^i \in \Gamma_2(s_i)$. The choices are made simultaneously and independently. The game subsequently transitions to the new state $s_{i+1} = \delta(s_i, a_1, a_2)$ and the same process continues. This leads to an infinite sequence of tuples $p = \left(s_i, a_1^i, a_2^i\right)_{i=0}^{\infty}$ which is called a {\em play} of the game. We denote the set of all plays by $\rah$. Every finite prefix $p[..r] := \left( (s_0, a_1^0, a_2^0), (s_1, a_1^1, a_2^1), \ldots, (s_{r}, a_1^{r}, a_2^{r}) \right)$ of a play is called a {\em history} and the set of all histories is denoted by $\Histories$. If $h = p[..r]$ is a history, we denote the last state appearing according to $h$, i.e.~$s_{r+1} = \delta(s_r, a_1^r, a_2^r)$, by $\last(h)$. We also define $p[..-1]$ as the empty history.

\smallskip\noindent\emph{Strategies and Mixed strategies.} 
A strategy is a recipe that describes for a player the action to play given the current game history. 
Formally, a strategy $\purestrategy_i$ for player $i$ is a function $\purestrategy_i : \Histories \rightarrow A$, 
such that $\purestrategy_i(h) \in \Gamma_i(\last(h))$. A pair $\purestrategy = (\purestrategy_1, \purestrategy_2)$ of strategies for the two players is called a strategy profile. Each such $\purestrategy$ induces a unique play.
A mixed strategy $\strategy_i : \Histories \rightarrow \Delta(A)$ for player $i$ given the history of the game. Intuitively, such a strategy suggests a distribution of actions to player $i$ at each step and then she plays one of them randomly according to that distribution. Of course it must be the case that $\Supp(\strategy_i(h)) \subseteq \Gamma_i(\last(h))$. 
A pair $\strategy = (\strategy_1, \strategy_2)$ of mixed strategies for the two players is called a mixed strategy profile.
Note that mixed strategies generalize strategies with randomization.
Every mixed strategy profile $\strategy = (\strategy_1, \strategy_2)$ induces a unique probability measure on the set of plays, 
which is denoted as $\Prob^{\strategy}[\cdot]$, and the associated expectation measure is denoted by $\E^{\strategy}[\cdot]$.

\smallskip\noindent \emph{State and History Utilities.} In a game structure $G$, a state utility function $u$ for player~1 is of the form $u: S \rightarrow \mathbb{R}$. Intuitively, this means that when the game enters state $s$, player~1 receives a reward of $u(s)$. State utilities can be extended to history utilities. We define the utility of a history to be the sum of utilities of all the states included in that history. Formally, if $h = \left(s_i, a_1^i, a_2^i \right)_{i=0}^{r}$, then $u(h) = \sum_{i=0}^{r} u(s_i)$. Given a play $p \in \rah$, we denote the utility of its prefix of length $\fhlen$ by $u_\fhlen(p)$.

\smallskip\noindent \emph{Games.} A game is a pair ($G$, $u$) where $G$ is a game structure and $u$ is a utility function for player~1. We assume that player~1 is trying to maximize $u$, while player~2's goal is to minimize it.

\smallskip\noindent \emph{Values.} The $\fhlen$-step finite-horizon value of a game $(G, u)$ is defined as
\begin{equation} \label{eq:gameval}
\gamevalue_\fhlen(G, u) := \sup_{\strategy_1} \inf_{\strategy_2} \E^{(\sigma_1, \sigma_2)} \left[ u_\fhlen(p)\right],
\end{equation}
where $\sigma_i$ iterates over all possible mixed strategies of player $i$. This models the fact that player~1 is trying to maximize the utility in the first $\fhlen$ steps of the run, while player~2 is minimizing it. %For $\epsilon \geq 0$, we say that a mixed strategy $\sigma_1$ for player 1 is $\epsilon$-optimal if $\gamevalue(G, u) - \inf_{\sigma_2} u(\sigma_1, \sigma_2) < \epsilon$.
The values of games can be computed using the value-iteration algorithm or dynamic programming, which is standard.
A formal treatment of the standard algorithms for games is presented in Appendix~\ref{app:games}.  

\begin{remark}
	Note that in (\ref{eq:gameval}), limiting player 2 to pure strategies does not change the value of the game. Hence, we can assume that player 2 is an arbitrarily powerful nondeterministic adversary and get the exact same results. 
\end{remark}

\vspace{-1.5em}
\subsection{Translating contracts to games} \label{sec:trans} \label{SEC:TRANS}
\vspace{-0.5em}
The translation from bounded smart contracts to games is straightforward, where the states of the concurrent game encodes
the states of the contract. 
Correspondences between objects in the contract and game are as follows: 
(a)~moves in contracts with actions in games; (b)~run prefixes in contracts with histories in games; 
(c)~runs in contracts with plays in games; and 
(d)~policies (resp., randomized policies) in contracts with strategies (resp., mixed strategies) in games. Note that since all runs of the bounded contract are finite and have a limited length, we can apply finite horizon analysis to the resulting game, where $\fhlen$ is the maximal length of a run in the contract. 
This gives us the following theorem:

\vspace{-0.7em}
\begin{theorem}[Correspondence] 
Given a bounded contract $C_k$ for a party $\party$ with objective $o$, a concurrent game can be constructed such that 
 value of this game, $\gamevalue_\fhlen(G, u)$, is equal to the value of the bounded contract, $\mathsf{V}(C_k, o, \party)$.
\end{theorem}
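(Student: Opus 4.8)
The plan is to establish the correspondence by constructing an explicit bijection between the relevant objects of the bounded contract $C_k$ and the concurrent game $(G,u)$, and then to show that this bijection preserves the quantities that appear on the two sides of the claimed equality. Concretely, I would define the state set $S$ of $G$ to be the set of states of $C_k$ that can appear in a run (together with a sink/absorbing state for runs that have terminated), set $s_0$ to be the initial contract state $(0,0,0,X_0,\perp)$, and let the action set $A$ be the set of moves $\moves$ from the contract semantics. For each contract state $s$ and each party $i$, the available actions $\Gamma_i(s)$ would be the permitted moves $P_i(s)$; since the game above is two-player, one player is the distinguished party $\party$ whose objective we track, and the ``second player'' is the product of all the other (adversarial) parties, whose action alphabet is the product $\prod_{i\neq \party} P_i(s)$ — this requires a small remark that a finite product of finite action sets is again a finite action set, so the two-player formalism of Section~\ref{sec:concur_games} suffices. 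The transition function $\delta$ is read off from the operational semantics: given the current contract state and the chosen moves of all parties, the rules for time progress, one-party function execution, and multi-party function execution determine the next contract state deterministically (ticks, control-flow edges, clamping of out-of-range values, default behaviour for absent parties), so $\delta$ is well-defined.

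Next I would make the correspondences (a)--(d) from the paragraph preceding the theorem precise and check they are compatible: moves $\leftrightarrow$ actions by the definition of $A$; run prefixes $\leftrightarrow$ histories and runs $\leftrightarrow$ plays by induction on length, using that $\delta$ mirrors the single-step transition relation of the contract; and policies (resp.\ randomized policies) $\leftrightarrow$ strategies (resp.\ mixed strategies), since a policy $\pi_i:\Eta\to\moves$ with $\pi_i(\eta)\in P_i(\ENd(\eta))$ is, under the bijection $\Eta\cong\Histories$, exactly a strategy $\purestrategy_i:\Histories\to A$ with $\purestrategy_i(h)\in\Gamma_i(\last(h))$, and likewise for the randomized versions and the induced probability measures $\Prob^\xi[\cdot]$ vs.\ $\Prob^\strategy[\cdot]$. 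Because the bounded contract has all runs of length at most some fixed $\fhlen$ (boundedly many parties, each calling each function at most once per tick, boundedly many ticks), finite-horizon analysis with that $\fhlen$ is applicable, and extending runs to the sink state after termination does not change anything provided the sink has utility $0$.

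The crux is step three: choosing the state utility $u$ so that, for every play $p$ corresponding to a run $\pho$, the horizon-$\fhlen$ utility $u_\fhlen(p)=\sum_{i=0}^{\fhlen} u(s_i)$ equals the run outcome $\kappa(\pho, o, \party)$. This is the main obstacle, because $\kappa$ has two genuinely different flavours: a \emph{path-additive} part — the net payment $\party^+-\party^-$, which accumulates along the run and is naturally a sum of per-transition increments — and a \emph{state-dependent} part — an expression over the final valuation of the variables in $N\cup\{m[i]\}$, which depends only on $\ENd(\pho)$ and is \emph{not} additive. The plan to reconcile them is: (i) enrich the state space, if necessary, with an auxiliary component recording $\party^+-\party^-$ so far (this stays bounded since balances are bounded), so that the payment part can be charged incrementally — assign to each transition that pays $y$ to $\party$ a reward $+y$ and to each ``payable'' move where $\party$ pays $y$ a reward $-y$, realised as a utility attached to the resulting state; and (ii) for the state-dependent part, assign utility $0$ to every non-terminal state and, to the unique terminal state of each run, the value of the expression $o$ evaluated at that state's valuation. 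Then telescoping gives $u_\fhlen(p)=\kappa(\pho,o,\party)$ for every corresponding $\pho,p$. Finally, I would push this identity through the $\sup\inf$: since the bijection between mixed strategies and randomized policies preserves the induced distribution over runs/plays and the pointwise identity of payoffs gives $\E^{(\xi_\party,\xi_{-\party})}[\kappa(\pho,o,\party)] = \E^{(\strategy_1,\strategy_2)}[u_\fhlen(p)]$, taking $\sup_{\xi_\party}\inf_{\xi_{-\party}}$ on one side equals $\sup_{\strategy_1}\inf_{\strategy_2}$ on the other, which is exactly $\mathsf{V}(C_k,o,\party)=\gamevalue_\fhlen(G,u)$. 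The remark after the definition of values (that restricting player~2 to pure strategies is harmless) can be invoked to match the adversarial-collusion reading of $\mathsf{V}$, though it is not strictly needed since both sides are stated with mixed strategies.
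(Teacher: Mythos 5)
Your construction matches the paper's: states of the game are contract states plus a sink, player~1 is $\party$, player~2 is the coalition with action set $\prod_{i\neq\party}P_i(s)$, and the utility $u$ charges payments incrementally and the expression part of $o$ only at the terminal state, so that $u_\fhlen(p)=\kappa(\pho,o,\party)$ play-by-play. The gap is in the last step, where you ``push the identity through the $\sup\inf$'' by asserting a bijection between randomized policies and mixed strategies that preserves the induced measures. For player~1 this bijection exists, but for player~2 it does not: a randomized policy profile $\xi_{-\party}$ is a tuple of \emph{independent} randomized policies, one per party, inducing a product distribution over the coalition's joint move at each step, whereas a mixed strategy $\strategy_2$ of player~2 is an arbitrary distribution over $\prod_{i\neq\party}P_i(s)$ and may correlate the parties' randomization within a single step. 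So the inner infimum in $\gamevalue_\fhlen(G,u)$ ranges over a strictly larger set than the inner infimum in $\mathsf{V}(C_k,o,\party)$ (whenever there are at least two adversarial parties), and equality of the two $\sup\inf$ expressions is not automatic; a priori the game value could only be shown $\leq$ the contract value by your argument. The paper flags exactly this in its correspondence list (the map from coalition policy profiles to player-2 mixed strategies ``is not surjective'') and closes the gap through best responses: the inner infimum on either side is attained by a \emph{pure} player-2 strategy, and pure player-2 strategies do correspond exactly to (deterministic) policy profiles of the other parties, so the two infima coincide; the paper runs this argument in both directions, starting from an optimal profile on one side and exhibiting a pure best response that transfers to the other.

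Ironically, you mention the relevant remark (that restricting player~2 to pure strategies does not change the value) but dismiss it as ``not strictly needed since both sides are stated with mixed strategies.'' It is precisely what is needed: invoking it (or, equivalently, reducing both inner infima to infima over deterministic coalition behaviour before transferring) repairs the argument, and with that repair your proof becomes essentially the paper's. A small further remark: the auxiliary state component recording $\party^+-\party^-$ is unnecessary, since the per-transition rewards you attach to payout and payment states already make the payoff additive without enlarging the state space, which is also how the paper defines $u$.
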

\vspace{-0.5em}

Details of the translation of smart contracts to games and proof of the theorem above is relegated to Appendix \ref{app:contogame}.
\vspace{-0.5em}
\begin{remark}
	Note that in standard programming languages where there is no interaction the underlying mathematical models are 
	graphs. In contrast, for the smart contracts programming languages we consider there are game theoretic interaction,
	and hence concurrent games on graphs are considered as the underlying mathematical model.
\end{remark}

\vspace{-1.5em}
\section{Abstraction for Quantitative Concurrent Games} \label{sec:abstract}
\vspace{-0.5em}
Abstraction is a key technique to handle large-scale systems.
In the previous section we described that smart contracts can be 
translated to games, but due to state-space explosion (since we allow
integer variables), the resulting state space of the game is huge.
Hence, we need techniques for abstraction, as well as refinement of abstraction, for 
concurrent games with quantitative utilities. 
In this section we present such abstraction refinement for quantitative
concurrent games, which is our main technical contribution in this paper. 
We prove soundness of our approach and its completeness in the limit. 
Then, we introduce a specific method of abstraction, called interval abstraction, 
which we apply to the games obtained from contracts and show that soundness and refinement are inherited from the general case. 
We also provide a heuristic for faster refining of interval abstractions for games obtained from contracts. 
\vspace{-1.3em}
\subsection{Abstraction for quantitative concurrent games}
\vspace{-0.5em}
Abstraction considers a partition of the state space, and reduces the number of states by taking each partition set as a state. 
In case of transition systems (or graphs) the standard technique is to consider existential (or universal) abstraction 
to define transitions between the partition sets. However, for game-theoretic interactions such abstraction ideas are not
enough.
We now describe the key intuition for abstraction in concurrent games with quantitative objectives and formalize it. We also provide a simple example for illustration.

\smallskip\noindent{\em Abstraction idea and key intuition.} 
In an abstraction the state space of the game $(G,u)$ is partitioned into several 
abstract states, where an abstract state represents a set of states of the original game.
Intuitively, an abstract state represents a set of similar states of the original game.
Given an abstraction our goal is to define two games that can provide lower and upper 
bound on the value of the original game.
This leads to the concepts of lower and upper abstraction.
\begin{itemize}

\item {\em Lower abstraction.} The lower abstraction $(G^{\down},u^{\down})$ represents 
a lower bound on the value. 
Intuitively, the utility is assigned as minimal utility among states in the partition, 
and when an action profile can lead to different abstract states, then the adversary, i.e.~player~2, chooses the transition. 

\item {\em Upper abstraction.} The upper abstraction $(G^{\up},u^{\up})$ represents 
an upper bound on the value. 
Intuitively, the utility is assigned as maximal utility among states in the partition, 
and when an action profile can lead to different abstract states, then player~1 
is chooses between the possible states. 
\end{itemize} 
Informally, the lower abstraction gives more power to the adversary, player~2, whereas 
the upper abstraction is favorable to player~1.

\smallskip\noindent{\em General abstraction for concurrent games.}
Given a game $(G, u)$ consisting of a game structure $G = (S, s_0, A, \Gamma_1, \Gamma_2, \delta)$ and a utility function $u$, and a partition $\partition$ of $S$, the lower and upper abstractions, $(G^\down = (S^\ab, s_0^\ab, A^\ab, \Gamma_1^\down, \Gamma_2^\down, \delta^\down), u^\down)$ and $(G^\up = (S^\ab, s_0^\ab, A^\ab, \Gamma_1^\up, \Gamma_2^\up, \delta^\up), u^\up)$, of $(G, u)$ with respect to $\partition$ are defined as: 
\begin{itemize}
	\item $S^\ab = \partition \cup \Dummy$, where $\Dummy = \partition \times A \times A$ is a set of dummy states for giving more power to one of the players. Members of $S^\ab$ are called abstracted states.
	\item The start state of $G$ is in the start state of $G^\up$ and $G^\down$, i.e.~$s_0 \in s_0^\ab \in \partition$.
	\item $A^\ab = A \cup \partition$. Each action in abstracted games either corresponds to an action in the original game or to a choice of the next state.
	\item If two states $s_1, s_2 \in S$, are in the same abstracted state $s^\ab \in \partition$, then they must have the same set of available actions for both players, i.e.~$\Gamma_1(s_1) = \Gamma_1(s_2)$ and $\Gamma_2(s_1) = \Gamma_2(s_2)$. Moreover, $s^\ab$ inherits these action sets. Formally, $\Gamma_1^\down(s^\ab) = \Gamma_1^\up(s^\ab) = \Gamma_1(s_1) = \Gamma_1(s_2)$ and $\Gamma_2^\down(s^\ab) = \Gamma_2^\up(s^\ab) = \Gamma_2(s_1) = \Gamma_2(s_2)$. 
	\item For all $\pi \in \partition$ and $a_1 \in \Gamma_1^\down(\pi)$ and $a_2 \in \Gamma_2^\down(\pi)$, we have $\delta^\down(\pi, a_1, a_2) = (\pi, a_1, a_2) \in \Dummy$. Similarly for $a_1 \in \Gamma_1^\up(\pi)$ and $a_2 \in \Gamma_2^\up(\pi)$, $\delta^\up (\pi, a_1, a_2) = (\pi, a_1, a_2) \in \Dummy$. This means that all transitions from abstract states in $\partition$ go to the corresponding dummy abstract state in $\Dummy$.  
	\item If $\dummy = (\pi, a_1, a_2) \in \Dummy$ is a dummy abstract state, then let $X_\dummy = \{ \pi' \in \partition ~~~ \vert ~~~ \exists ~~~ s \in \pi \quad \delta(s, a_1, a_2) \in \pi' \}$ be the set of all partition sets that can be reached from $\pi$ by $a_1, a_2$ in $G$. Then in $G^\down$, $\Gamma^\down_1(\dummy)$ is a singleton, i.e., player~1 has no choice, and $\Gamma^\down_2(\dummy) = X_d$, i.e., player~2 can choose which abstract state is the next. Conversely, in $G^\up$, $\Gamma^\up_2(d)$ is a singleton and player~2 has no choice, while $\Gamma^\up_1(d) = X_d$ and player~1 chooses the next abstract state.
	\item In line with the previous point, $\delta^\down(d, a_1, a_2) = a_2$ and $\delta^\up(d, a_1, a_2) = a_1$ for all $d \in \Dummy$ and available actions $a_1$ and $a_2$.  
	\item We have $u^\down(s^\ab) = \min_{s \in s^\ab} \{u(s)\}$ and $u^\up(s^\ab) = \max_{s \in s^\ab} \{u(s)\}$. The utility of a non-dummy abstracted state in $G^\down$, resp. $G^\up$, is the minimal, resp. maximal, utility among the normal states included in it. Also, for each dummy state $\dummy \in \Dummy$, we have $u^\down(\dummy) = u^\up(\dummy) = 0$.
\end{itemize}
Given a partition $\partition$ of $S$, either (i)~there is no lower or upper abstraction corresponding to it because it puts states with different sets of available actions together; or (ii)~there is a unique lower and upper abstraction pair. 
Hence we will refer to the unique abstracted pair of games by specifying $\partition$ only. 

\begin{remark}\label{rem:dummy}
Dummy states are introduced for conceptual clarity in explaining the ideas because in lower abstraction all choices
are assigned to player~2 and upper abstraction to player~1.
However, in practice, there is no need to create them, as the choices can be allowed to the respective players in the predecessor state. 
\end{remark}

\smallskip\noindent{\em Example.} Figure~\ref{pic:examplefix} (left) shows a concurrent game with $(G, u)$ with $4$ states. The utilities are denoted in red. The edges correspond to transitions in $\delta$ and each edge is labeled with its corresponding action pair. Here $A = \{ \a, \b \}$, $\Gamma_1(s_0) = \Gamma_2(s_0) = \Gamma_2(s_1) = \Gamma_1(s_2) = \Gamma_2(s_2) = \Gamma_2(s_3) = A$ and $\Gamma_1(s_1) = \Gamma_1(s_3) = \{ \a \}$.
Given that action sets for $s_0$ and $s_2$ are equal, we can create abstracted games using the partition $\partition = \{ \pi_0, \pi_1, \pi_2 \}$ where $\pi_1 = \{s_0, s_2\}$ and other sets are singletons. The resulting game structure is depicted in Figure~\ref{pic:examplefix} (center). Dummy states are shown by circles and whenever a play reaches a dummy state in $G^\down$, player~2 chooses which red edge should be taken. Conversely, in $G^\up$ player~1 makes this choice. Also, $u^\up(\pi_0) = \max \{u(s_0), u(s_2)\} = 10, u^\down(\pi_0) = \min \{u(s_0), u(s_2)\} = 0$ and $u^\up(\pi_1) u^\down(\pi_1) = u(s_1) = 10, u^\up(\pi_2) = u^\down(\pi_2) = u(s_3) = 0$.
The final abstracted $G^\down$ of the example above, without dummy states, is given in Figure~\ref{pic:examplefix} (right).

\begin{figure}[!htb]
	\vspace{-4mm}
	\hspace{-1cm}
	\includegraphics[scale=0.11]{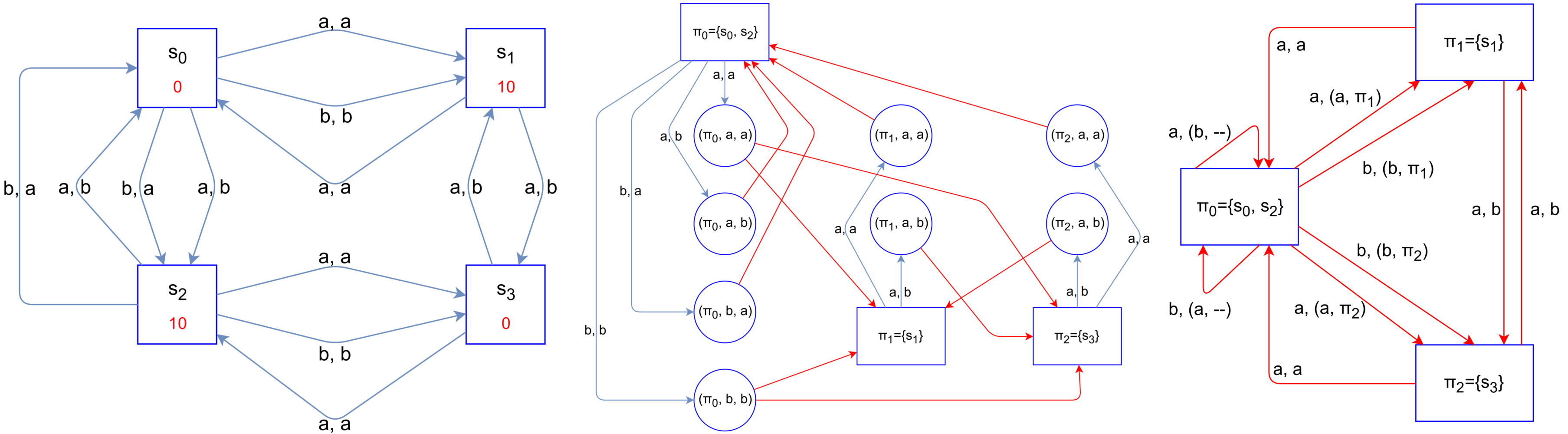}
	\vspace{-2mm}
	\caption{An example concurrent game (left), abstraction process (center) and the corresponding $G^\down$ without dummy states (right).}
	\label{pic:examplefix}
	\vspace{-5mm}
\end{figure}

\vspace{-1em}
\subsection{Abstraction: soundness, refinement, and completeness in limit}
\vspace{-0.5em}
For an abstraction we need to prove three key properties: (a)~soundness, 
(b)~refinement of the abstraction, and (c)~completeness in the limit.
The intuitive description is as follows:
(a)~soundeness requires that the value of the games is between the value of the 
lower and upper abstraction; 
(b)~refinement requires that if the partition is refined, then the values of 
lower and upper abstraction becomes closer;
and (c)~completeness requires that if the partitions are refined enough, then the
value of the original game can be approximated.
We present and prove each of these results below. 

\vspace{-1em}
\subsubsection{Soundness.} 
Soundness means that when we apply abstraction, value of the original game must lie between values of the lower and upper abstractions. Intuitively, this means abstractions must provide us with some interval containing the value of the game. We expect the value of $(G^\down, u^\down)$ to be less than or equal to the value of the original game because in $(G^\down, u^\down)$, the utilities are less than in $(G, u)$ and player~2 has more power, given that she can choose which transition to take. Conversely, we expect $(G^\up, u^\up)$ to have a higher value than $(G, u)$.

\smallskip\noindent{\em Formal requirement for Soundness.}
An abstraction of a game $(G, u)$ leading to abstraction pair $(G^\up, u^\up), (G^\down, u^\down)$ is sound if for every $\fhlen$, 
$$
\gamevalue_{2 \fhlen}(G^\down, u^\down) \leq \gamevalue_{\fhlen}(G, u) \leq \gamevalue_{2 \fhlen}(G^\up, u^\up).
$$
The factor $2$ in the inequalities above is due to the fact that each transition in the original game is modeled by two transitions in abstracted games, one to a dummy state and a second one out of that dummy state.

We now formally prove our soundness result. The main intuition in this proof is that letting player~1 get the minimal reward in each partition set when she reaches any state of the set, and allowing player~2 to choose the resulting state among all possibilities cannot possibly be in player~1's favor and increase her utility. Similarly, doing the opposite thing by letting her get the maximal reward and choose the transition cannot possibly decrease her utility.

\begin{theorem}[Soundness] \label{thm:soundness}
Given a game $(G, u)$ and a partition $\partition$ of its state space, 
if $G^\up$ and $G^\down$ exist, then the abstraction is sound, i.e.~ for all $\fhlen$, it is the case that $\gamevalue_{2 \fhlen}(G^\down, u^\down) \leq \gamevalue_{\fhlen}(G, u) \leq \gamevalue_{2 \fhlen}(G^\up, u^\up)$.
\end{theorem}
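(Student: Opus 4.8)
The plan is to prove the two inequalities $\gamevalue_{2\fhlen}(G^\down, u^\down) \leq \gamevalue_\fhlen(G, u)$ and $\gamevalue_\fhlen(G, u) \leq \gamevalue_{2\fhlen}(G^\up, u^\up)$ separately; by the symmetry between the roles of the two players in the definitions of $G^\down$ and $G^\up$, it suffices to argue one of them carefully, say the upper bound, and then indicate how the argument dualizes. The core idea is a strategy-transfer argument: from any mixed strategy of player~1 in the original game $(G,u)$ I would build a mixed strategy in the upper abstraction $(G^\up, u^\up)$ that does at least as well, and conversely from any strategy of player~2 in $(G^\up, u^\up)$ I would build one in $(G,u)$ that keeps player~1's payoff no larger. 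The factor of $2$ on the horizon is bookkeeping: each step of $(G,u)$ corresponds to exactly two steps in the abstracted game (one into a dummy state, one out of it), and the dummy states contribute $0$ to the utility, so the running sums line up.

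Concretely, first I would set up the correspondence between histories. A history $h$ in $G$ that visits states $s_0, s_1, \dots, s_r$ maps to a history $h^\ab$ in $G^\up$ of twice the length that visits $\pi(s_0), (\pi(s_0),a_1^0,a_2^0), \pi(s_1), (\pi(s_1),a_1^1,a_2^1),\dots$, where $\pi(s)$ is the partition block containing $s$; the extra moves in $G^\up$ at a dummy state are the ``pick the next block'' choices of player~1. For the inequality $\gamevalue_\fhlen(G,u) \leq \gamevalue_{2\fhlen}(G^\up, u^\up)$: fix an $\varepsilon$-optimal mixed strategy $\sigma_1$ for player~1 in $(G,u)$ and lift it to $\sigma_1^\up$ in $(G^\up,u^\up)$ — at a genuine abstract state play the action $\sigma_1$ would play (legal since blocks in a partition preserve action sets), and at the dummy state $(\pi, a_1, a_2)$ use player~1's choice power to move to the block $\pi(\delta(s,a_1,a_2))$ corresponding to the \emph{actual} current state $s$ (this requires the lifted strategy to track, in its memory of the history, which concrete state of the block the play is really in — the dummy-state choice is a function of the full history, so this is allowed). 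Then against any player-2 strategy in $G^\up$, pull it back to a player-2 strategy in $G$; since $u^\up(\pi(s)) = \max_{s' \in \pi(s)} u(s') \geq u(s)$ and dummy states add $0$, the accumulated utility over $2\fhlen$ steps in $G^\up$ dominates the accumulated utility over $\fhlen$ steps in $G$ along corresponding plays, giving $\gamevalue_{2\fhlen}(G^\up,u^\up) \geq \gamevalue_\fhlen(G,u) - \varepsilon$ for all $\varepsilon$. The inequality $\gamevalue_{2\fhlen}(G^\down,u^\down) \leq \gamevalue_\fhlen(G,u)$ is dual: now it is player~2 who gets the block-choice power at dummy states and $u^\down(\pi(s)) = \min_{s'\in\pi(s)} u(s') \leq u(s)$, so a lift of any player-2 strategy from $G$, combined with pulling back player-1 strategies, shows player~1 cannot do better in $(G^\down,u^\down)$.

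The main subtlety — and the step I expect to need the most care — is the memory/measurability issue in the lifting. In $G^\up$ a strategy only sees the abstract history (sequence of blocks and action pairs), not the concrete state, yet I want player~1's dummy-state choice to depend on the true concrete state. The resolution is that the concrete state \emph{is} a deterministic function of the abstract history in $G$: starting from $s_0$ and applying $\delta$ to the recorded action pairs recovers $s_0, s_1, \dots$ exactly, so ``which block to jump to'' is a well-defined (deterministic) function of the abstract history, and the lifted strategy is a legitimate mixed strategy in $G^\up$. One must check this is consistent with player~2's freedom: in $G$, $\delta$ is deterministic, so no ambiguity arises; the only genuine nondeterminism that abstraction introduces is \emph{which} state of a block you land in, and by construction that choice is handed wholesale to one player. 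After nailing this down, the payoff comparison is a routine induction on the horizon $\fhlen$, pushing expectations through one original step / two abstract steps at a time and using monotonicity of $\sup$ and $\inf$ together with the pointwise inequalities $u^\down \leq u \leq u^\up$ on blocks and $u^\down = u^\up = 0$ on dummies.
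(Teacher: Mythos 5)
Your proposal is correct and is essentially the paper's own argument in mirror image: the paper works out the lower bound (transferring an optimal player-1 strategy from $(G^\down,u^\down)$ into $G$ and letting a constructed abstract player-2 strategy track the concrete play), while you work out the upper bound by lifting an $\varepsilon$-optimal player-1 strategy from $G$ into $G^\up$ and reconstructing the concrete state deterministically from the abstract history; the key ingredients — the history correspondence, the state-tracking observation, the pointwise inequalities $u^\down \le u \le u^\up$ with dummy states worth $0$, and the factor-$2$ horizon bookkeeping — coincide with the paper's. The only caveat is your one-line dual for the $G^\down$ inequality: if done literally by lifting a single player-2 strategy that is good against all of player 1's strategies in $G$, it additionally invokes finite-horizon determinacy of $(G,u)$ (available in the paper's appendix), or it can be rearranged, exactly as in the paper, by pulling back each abstract player-1 strategy and lifting a best response to it, which needs no determinacy.
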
 
\begin{proof}
	We prove the first inequality, the second one can be done similarly. 
	
	For a mixed strategy $\sigma_1^\down$ for player~1 in $G^\down$, let $v(\sigma_1^\down) := \inf_{\sigma_2^\down} \E^{(\sigma_1^\down, \sigma_2^\down)} [u^\down_{2 \fhlen} (p)] $ be the guaranteed value of the game if player~1 plays $\sigma_1^\down$. We say that $\sigma^\down_2$ is a best response to $\sigma^\down_1$ if playing $\sigma^\down_1, \sigma^\down_2$ leads to a total utility of $v(\sigma^\down_1)$. We define $v(\sigma_1)$ and best responses in $G$ analogously.
	
	Let $\sigma^\down_1$ be a strategy for player~1 in $G^\down$, such that $v(\sigma^\down_1) = \sup_{\varsigma^\down_1} v(\varsigma^\down_1) = \gamevalue_{2\fhlen} (G^\down, u^\down)$. Such a strategy exists because the set of all strategies for player~1 is compact and $\gamevalue_{2\fhlen}$ is continuous. Let $\sigma_1$ be a strategy for player~1 in $G$ that follows $\sigma^\down_1$, i.e.~looks at the histories of $G$ as histories of $G^\down$ and assigns to each history of $G$ the action that $\sigma^\down_1$ assigns to the corresponding history of $G^\down$. Then, let $\sigma_2$ be a best response to $\sigma_1$ in $G$ and $\sigma^\down_2$ a strategy in $G^\down$ that follows $\sigma_2$ against $\sigma^\down_1$, i.e.~chooses actions in non-dummy states in accordance with actions chosen by $\sigma_2$ and actions in dummy states in accordance with transitions of the play of $(\sigma_1, \sigma_2)$. Intuitively, $\sigma^\down_2$ is player~2's best strategy if she does not use her additional ability of choosing the next state in $G^\down$. It is evident by construction that $\E^{(\sigma_1^\down, \sigma_2^\down)} [u^\down_{2 \fhlen} (p)] \leq \E^{(\sigma_1, \sigma_2)} [u_{\fhlen} (p)]$, because paths in $G^\down$ according to these strategies go through a sequence of partition sets that correspond exactly to the sequence of states that are visited in $G$ and the utility of each such partition set is defined to be less than or equal to the utility of each of its states. Therefore $v(\sigma^\down_1) \leq v(\sigma_1)$, given that $\sigma_2$ was a best response. This means that $\gamevalue_{2 \fhlen}(G^\down, u^\down) = v(\sigma_1^\down) \leq v(\sigma_1) \leq \gamevalue_{\fhlen}(G, u)$.
\end{proof}

\vspace{-1em}
\subsubsection{Refinement.}
We say that a partition $\partition_2$ is a refinement of a partition $\partition_1$, and write $\partition_2 \refines \partition_1$, if every $\pi \in \partition_1$ is a union of several $\pi_i$'s in $\partition_2$, i.e.~$\pi = \bigcup_{i \in \mathcal{I}} \pi_{i}$ and for all $i \in \mathcal{I}$, $\pi_{i} \in \partition_2$. Intuitively, this means that $\partition_2$ is obtained by further subdividing the partition sets in $\partition_1$. It is easy to check that $\refines$ is a partial order over partitions. We expect that if $\partition_2 \refines \partition_1$, then the abstracted games resulting from $\partition_2$ give a better approximation of the value of the original game in comparison with abstracted games resulting from $\partition_1$. This is called the refinement property.

\smallskip\noindent{\em Formal requirement for the Refinement Property.} Two abstractions of a game $(G, u)$ using two partitions $\partition_1, \partition_2$, such that $\partition_2 \refines \partition_1$, and leading to abstracted games $(G_i^\up, u_i^\up), (G_i^\down, u_i^\down)$ corresponding to each $\partition_i$ satisfy the refinement property if for every $\fhlen$,
	$$
	\gamevalue_{2 \fhlen}(G^\down_1, u^\down_1) \leq \gamevalue_{2\fhlen}(G^\down_2, u^\down_2) \leq \gamevalue_{2 \fhlen}(G^\up_2, u^{\up}_2) \leq \gamevalue_{2 \fhlen}(G^\up_1, u^\up_1).
	$$

We now prove that any two abstractions with $\partition_2 \refines \partition_1$ satisfy this property.

\begin{theorem}[Refinement Property] \label{thm:refinement}
	Let $\partition_2 \refines \partition_1$ be two partitions of the state space of a game $(G, u)$, then the abstractions corresponding to $\partition_1, \partition_2$ satisfy the refinement property.
\end{theorem}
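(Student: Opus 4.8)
The middle inequality $\gamevalue_{2\fhlen}(G^\down_2,u^\down_2)\le\gamevalue_{2\fhlen}(G^\up_2,u^\up_2)$ is free: applying the Soundness Theorem (Theorem~\ref{thm:soundness}) to the partition $\partition_2$ already gives $\gamevalue_{2\fhlen}(G^\down_2,u^\down_2)\le\gamevalue_{\fhlen}(G,u)\le\gamevalue_{2\fhlen}(G^\up_2,u^\up_2)$. Moreover the upper abstraction of a game is precisely the lower abstraction of the ``dual'' game obtained by exchanging the two players and negating $u$, and the relation $\partition_2\refines\partition_1$ is untouched by this duality; hence the rightmost inequality $\gamevalue_{2\fhlen}(G^\up_2,u^\up_2)\le\gamevalue_{2\fhlen}(G^\up_1,u^\up_1)$ follows mechanically from the leftmost one $\gamevalue_{2\fhlen}(G^\down_1,u^\down_1)\le\gamevalue_{2\fhlen}(G^\down_2,u^\down_2)$. (Also, since $\partition_2$ is finer than $\partition_1$, existence of $G^\up_1,G^\down_1$ automatically forces existence of $G^\up_2,G^\down_2$, so assuming the former suffices.) Thus the whole theorem reduces to proving $\gamevalue_{2\fhlen}(G^\down_1,u^\down_1)\le\gamevalue_{2\fhlen}(G^\down_2,u^\down_2)$.

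\textbf{An abstraction map between the two abstract games.} I would prove this inequality by a strategy-transfer argument in the style of the Soundness proof, but carried out \emph{directly between} $G^\down_1$ and $G^\down_2$, so that the horizons match and no extra factor of $2$ appears. Define $\psi$ from the state space of $G^\down_2$ to that of $G^\down_1$ by sending a non-dummy state $\pi_2\in\partition_2$ to the unique $\pi_1\in\partition_1$ with $\pi_2\subseteq\pi_1$, and a dummy state $(\pi_2,a_1,a_2)\in\Dummy$ to $(\psi(\pi_2),a_1,a_2)$. The three things to check are: (i) \emph{$\psi$ preserves available actions} — at non-dummy states both $\pi_2$ and $\psi(\pi_2)$ inherit the common action sets of the underlying $G$-states; at a dummy state player~$1$'s set is a singleton in both games, while $\psi\big(X_{(\pi_2,a_1,a_2)}\big)\subseteq X_{(\psi(\pi_2),a_1,a_2)}$ because $s\in\pi_2$ with $\delta(s,a_1,a_2)\in\pi_2'$ implies $s\in\psi(\pi_2)$ and $\delta(s,a_1,a_2)\in\psi(\pi_2')$; (ii) \emph{$\psi$ intertwines the transition functions} of $G^\down_2$ and $G^\down_1$, hence carries every history (and play) of $G^\down_2$ to a legal history (play) of $G^\down_1$; (iii) \emph{$\psi$ only lowers utilities}: as $s\subseteq\psi(s)$ and $u^\down$ is a minimum over the underlying states, $u^\down_1(\psi(s))\le u^\down_2(s)$ for every state $s$ (and both equal $0$ on dummy states), so $u^\down_1$ summed along $\psi(p)$ is $\le u^\down_2$ summed along $p$ for any play $p$.

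\textbf{Transferring strategies.} Pick a player-$1$ strategy $\sigma_1$ in $G^\down_1$ attaining $\gamevalue_{2\fhlen}(G^\down_1,u^\down_1)$ (this exists by the same compactness/continuity argument used in the Soundness proof) and define $\hat\sigma_1$ in $G^\down_2$ by $\hat\sigma_1(h):=\sigma_1(\psi(h))$, which is legal by (i)--(ii). For any player-$2$ strategy $\tau$ in $G^\down_2$, let $\hat\tau$ be the player-$2$ strategy in $G^\down_1$ that imitates $\tau$ through $\psi$ — copying $\tau$'s move at non-dummy states and pushing $\tau$'s distribution over dummy-state choices forward along $\psi$ (the exact analogue of the ``$\sigma_2^\down$ follows $\sigma_2$'' construction in the Soundness proof, well defined even when several $\pi_2'$ collapse to the same $\psi(\pi_2')$). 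Then the play of $(\hat\sigma_1,\tau)$ in $G^\down_2$ and the play of $(\sigma_1,\hat\tau)$ in $G^\down_1$ are coupled by $\psi$ with identical induced laws, so by (iii) one gets $\E^{(\hat\sigma_1,\tau)}\big[(u^\down_2)_{2\fhlen}(p)\big]\ge\E^{(\sigma_1,\hat\tau)}\big[(u^\down_1)_{2\fhlen}(p)\big]\ge v(\sigma_1)=\gamevalue_{2\fhlen}(G^\down_1,u^\down_1)$, where $v(\cdot)$ is the guaranteed value as in the Soundness proof. Taking the infimum over $\tau$ yields $v(\hat\sigma_1)\ge\gamevalue_{2\fhlen}(G^\down_1,u^\down_1)$, whence $\gamevalue_{2\fhlen}(G^\down_2,u^\down_2)\ge v(\hat\sigma_1)\ge\gamevalue_{2\fhlen}(G^\down_1,u^\down_1)$, as desired.

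\textbf{Where the difficulty lies.} I expect the one genuinely delicate point to be the transport of player~$2$'s strategy: in $G^\down_2$ player~$2$ may condition on information that $\psi$ erases (which sub-block $\pi_2\subseteq\pi_1$ one is in, or which of several collapsing dummy-successors was taken), so one must argue carefully that the forward image $\hat\tau$ is still a bona fide $G^\down_1$-strategy and, crucially, that the two plays are coupled with \emph{exactly} equal distributions so that the expectation inequality is legitimate. This is the same subtlety that is only lightly treated in the Soundness proof, here compounded by having to line up the dummy-state layers of two different abstractions. By contrast, verifying that $\psi$ respects the $\Dummy$/action bookkeeping and establishing the utility comparison (iii) are routine, and the upper-abstraction inequality comes for free from the duality noted at the start.
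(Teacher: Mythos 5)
Your proof is correct (at the level of rigor the paper itself uses), but it follows a genuinely different route from the paper's. The paper proves the lower inequality by \emph{composing abstractions}: since $\partition_2\refines\partition_1$, the coarser partition $\partition_1$ is viewed as a partition of the state space of $(G_2^\down,u_2^\down)$, a second lower abstraction $(G_2^{\down\down},u_2^{\down\down})$ is formed, and Theorem~\ref{thm:soundness} is invoked for that abstraction; the identification $\gamevalue_{2\fhlen}(G_1^\down,u_1^\down)=\gamevalue_{4\fhlen}(G_2^{\down\down},u_2^{\down\down})$ is justified by a claimed utility-preserving bijection between strategies and paths of $G_2^{\down\down}$ and $G_1^\down$, and the upper inequality is dispatched with ``a similar argument.'' You instead work \emph{directly} between $G_1^\down$ and $G_2^\down$ via the quotient map $\psi$, redoing the strategy-transfer/coupling argument of the soundness proof; this buys you matching horizons (no $2\fhlen$ versus $4\fhlen$ bookkeeping, no need to extend $\partition_1$ over the dummy states of $G_2^\down$ or to verify the $G_2^{\down\down}\cong G_1^\down$ bijection), at the cost of re-proving the simulation step, including the pushforward of player~2's strategy through $\psi$ --- which you correctly flag as the delicate point and which, in the finite setting, goes through by the conditional-pushforward argument you sketch (player~1's action depends only on $\psi(h)$, so the projected play law factors correctly). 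Your duality shortcut for the upper inequality (upper abstraction $=$ lower abstraction of the player-swapped, utility-negated game) is also sound, but note it silently uses determinacy of the finite-horizon concurrent games ($\sup\inf=\inf\sup$), which the paper does supply (Appendix~\ref{app:games}); the paper avoids this by simply repeating the symmetric argument. Both proofs are valid; the paper's is shorter because it reuses soundness wholesale, yours is more self-contained in how it lines up the two abstraction layers.
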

\begin{proof}
	Note that $\partition_2 \refines \partition_1$, so $\partition_1$ can itself be considered as a partition of $\partition_2$ and one can then define an abstracted pair of games $(G_2^{\down\down}, u_2^{\down\down})$ and $(G_2^{\down\up}, u_2^{\down\up})$ on $(G_2^\down, u_2^\down)$ with respect to $\partition_1$. Strategies and paths in $G_2^{\down\down}$ are in natural bijection with strategies and paths in $G_1^\down$ and the bijection preserves utility. Therefore, using the Soundness theorem above, we have $\gamevalue_{2\fhlen}(G_1^\down, u_1^\down) = \gamevalue_{4 \fhlen}(G_2^{\down \down}, u_2^{\down \down}) \leq \gamevalue_{2 \fhlen} (G_2^\down, u_2^\down)$. The other inequality is proven by a similar argument.
\end{proof}
\vspace{-1em}
\subsubsection{Completeness in the limit.}

We say that an abstraction is complete in the limit, if by refining it enough the values of upper and lower abstractions get as close together as desired. Equivalently, this means that if we want to approximate the value of the original game within some predefined threshold of error, we can do so by repeatedly refining the abstraction. 

\smallskip\noindent{\em Formal requirement for Completeness in the limit.} 
Given a game $(G, u)$, a fixed finite-horizon $\fhlen$ and an abstracted game pair corresponding to a partition $\partition_1$, the abstraction is said to be complete in the limit, if for every $\epsilon \geq 0$ there exists $\partition_2 \refines \partition_1$, such that if $(G^\down_2, u^\down_2), (G^\up_2, u^\up_2)$ are the abstracted games corresponding to $\partition_2$, then $\gamevalue_\fhlen(G_2^\up, u_2^\up) - \gamevalue_\fhlen(G_2^\down, u_2^\down) \leq \epsilon$.

\begin{theorem}[Completeness in the Limit] \label{thm:completeness}
	Every abstraction on a game $(G, u)$ using a partition $\partition$ is complete in the limit for every value of $\fhlen$.
\end{theorem}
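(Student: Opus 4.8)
The plan is to leverage the fact that the state space $S$ is finite, so that there is a finest partition --- the discrete one, $\partition_2 := \{\, \{s\} : s \in S \,\}$ --- and this partition refines \emph{every} partition of $S$, in particular the given $\partition_1$ (each $\pi \in \partition_1$ is the union of the singletons $\{s\}$, $s \in \pi$). Hence it suffices to prove the stronger fact that for the discrete partition the lower and upper abstractions have \emph{exactly equal} $\fhlen$-step values, so that the gap is $0 \le \epsilon$ for every $\epsilon \ge 0$ and every $\fhlen$ simultaneously; no quantitative rate at which the gap shrinks is needed for the stated requirement.

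First I would check that $\partition_2$ admits an abstraction at all: the only obstruction in the definition of $G^\up, G^\down$ is that two states lying in a common block must share the same available action sets, which is vacuous when every block is a singleton, so $(G^\down_2, u^\down_2)$ and $(G^\up_2, u^\up_2)$ are both well defined. Next I would argue these two abstracted games are literally the same game. On a non-dummy abstracted state $\{s\}$ we have $u^\down_2(\{s\}) = \min_{s' \in \{s\}}\{u(s')\} = u(s) = \max_{s' \in \{s\}}\{u(s')\} = u^\up_2(\{s\})$, and dummy states carry utility $0$ in both, so $u^\down_2 = u^\up_2$. For the transitions: from $\{s\}$ under an action pair $(a_1,a_2)$ both abstractions move to the dummy state $\dummy = (\{s\}, a_1, a_2)$, and there the set of reachable blocks is $X_\dummy = \{\, \{\delta(s,a_1,a_2)\} \,\}$, a singleton; hence in $G^\down_2$ (where player~2 would resolve the choice) and in $G^\up_2$ (where player~1 would) the choosing player has a one-element action set and the successor is forced to $\{\delta(s,a_1,a_2)\}$ in both. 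Thus the action assignments $\Gamma^\down_{i,2}, \Gamma^\up_{i,2}$ and the transition functions $\delta^\down_2, \delta^\up_2$ coincide, and the two abstractions are identical as games.

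Since $(G^\down_2, u^\down_2)$ and $(G^\up_2, u^\up_2)$ are the same game, $\gamevalue_\fhlen(G^\up_2, u^\up_2) = \gamevalue_\fhlen(G^\down_2, u^\down_2)$ for every $\fhlen$, so their difference is $0 \le \epsilon$; taking $\partition_2$ to be the discrete partition therefore witnesses completeness in the limit. The only place that needs care --- and it is bookkeeping rather than a real obstacle --- is the dummy-state layer: one must confirm that interposing dummy states does not covertly reintroduce a genuine choice when blocks are singletons, which is exactly the $X_\dummy$ computation above (cf.\ Remark~\ref{rem:dummy}). If one additionally wanted the more \emph{gradual} reading of completeness, namely that any sufficiently fine refinement already closes the gap, that would follow by bounding $\gamevalue_{2\fhlen}(G^\up, u^\up) - \gamevalue_{2\fhlen}(G^\down, u^\down)$ by a horizon-weighted sum of the per-block spreads $u^\up(\pi) - u^\down(\pi)$ together with a term counting the action pairs whose $X_\dummy$ is non-singleton, both of which vanish as the partition approaches the discrete one; but this strengthening is not required for the theorem as stated.
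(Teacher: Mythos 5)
Your proposal is correct and follows essentially the same route as the paper: both take the unit (singleton) partition, which refines every partition, and observe that its lower and upper abstractions coincide as games (utilities agree on singletons and the dummy states offer no genuine choice since each $X_\dummy$ is a singleton), so the gap is zero. Your explicit verification of the $X_\dummy$ bookkeeping is a slightly more detailed spelling-out of what the paper states in one line, but it is not a different argument.
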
 
\begin{proof}
	Consider the unit partition $\partition_* = \{ \{s\} \vert s \in S \}$, where $S$ is the set of states of $G$. For every partition $\partition$, we have $\partition_* \refines \partition$. On the other hand, the upper and lower abstracted games of $(G, u)$ with respect to $\partition_*$ are simply the same game as $(G, u)$ except that each transition now first goes to a dummy state and then comes out of it without any choice for any of the players. Therefore their values are the same and equal to the value of $(G, u)$.
\end{proof}

\vspace{-1em}
\subsection{Interval Abstraction}
\vspace{-0.5em}
 In this section, we turn our focus to games obtained from contracts and provide a specific method of abstraction that can be applied to them.

\smallskip\noindent{\em Intuitive Overview.}
Let $(G, u)$ be a concurrent game obtained from a contract as in the Section \ref{sec:trans}. Then the states of $G$, other than the unique dummy state, correspond to states of the contract $C_k$. Hence, they are of the form $s = (t, b, l, val, p)$, where $t$ is the time, $b$ the contract balance, $l$ is a label, $p$ is the party calling the current function and $val$ is a valuation. In an abstraction, one cannot put states with different times or labels or callers together, because they might have different moves and hence different action sets in the corresponding game. The main idea in interval abstraction is to break the states according to intervals over their balance and valuations. We can then refine the abstraction by making the intervals smaller. We now formalize this concept.

\smallskip \noindent{\em Objects.} Given a contract $C_k$, let $\objects$ be the set of all objects that can have an integral value in a state $s$ of the contract. This consists of the contract balance, numeric variables and $m[\party]$'s where $m$ is a map variable and $\party$ is a party. More precisely, $\objects = \{ \balance \} \cup N \cup \{ m[\party] \vert m \in M, \party \in \parties \}$ where $\balance$ denotes the balance. For an $o \in \objects$, the value assigned to $o$ at state $s$ is denoted by $o_s$.

\smallskip \noindent{\em Interval Partition.} Let $C_k$ be a contract and $(G, u)$ its corresponding game. A partition $\partition$ of the state space of $G$ is called an interval partition if:

\begin{itemize}
	\item The dummy state is put in a singleton set $\pi_\dummy$.
	\item Each $\pi \in \partition$ except $\pi_\dummy$ has associated values, $t_\pi, l_\pi, \party_\pi$ and for each $o \in \objects$, $\overline{o}_\pi, \underline{o}_\pi$, such that: 
	$$
	\pi = \{ s \in S \vert s=(t_\pi, b, l_\pi, \val, \party_\pi) \text{ and for all } o \in \objects,~~\underline{o}_\pi \leq s_o \leq \overline{o}_\pi \}.
	$$ 
	
	Basically, each partition set includes states with the same time, label and caller in which the value of every object $o$ is in an interval $[\underline{o}_\pi, \overline{o}_\pi]$.
	\end{itemize}
We call an abstraction using an interval partition, an interval abstraction.

\smallskip \noindent{\em Refinement Heuristic.} We can start with big intervals and continually break them into smaller ones to get refined abstractions and a finer approximation of the game value. We use the following heuristic to choose which intervals to break: Assume that the current abstracted pair of games are $(G^\down, u^\down)$ and $(G^\up, u^\up)$ corresponding to an interval partition $\partition$. Let $\dummy = (\pi_\dummy, a_1, a_2)$ be a dummy state in $G^\up$ and define the skewness of $\dummy$ as $\gamevalue(G^\up_\dummy, u^\up) - \gamevalue(G^\down_\dummy, u^\down)$. Intuitively, skewness of $\dummy$ is a measure of how different the outcomes of the games $G^\up$ and $G^\down$ are, from the point when they have reached $\dummy$. Take a label $l$ with maximal average skewness among its corresponding dummy states and cut all non-unit intervals of it in more parts to get a new partition $\partition'$. Continue the same process until the approximation is as precise as desired. 
Intuitively, it tries to refine parts of the abstraction that show the most disparity between $G^\down$ and $G^\up$ with the aim to bring their values closer. Our experiments show its effectiveness.

\smallskip \noindent{\em Soundness and Completeness in the limit.} If we restrict our attention to interval abstractions, soundness is inherited from general abstractions and completeness in the limit holds because $\partition_*$ is an interval partition. Therefore, using interval abstractions is both sound and complete in the limit.

\smallskip \noindent{\em Interval Refinement.} An interval partition $\partition'$ is interval refinement of a given interval partition $\partition$ if $\partition' \refines \partition$. Refinement property is inherited from general abstractions. This intuitively means that $\partition'$ is obtained by breaking the intervals in some sets of $\partition$ into smaller intervals.

\smallskip\noindent{\em Conclusion.}
We devised a sound abstraction-refinement method for approximating values of contracts. Our method is also complete in the limit. It begins by converting the contract to a game, then applies interval abstraction to the resulting game and repeatedly refines the abstraction using a heuristic until the desired precision is reached.

%\subsubsection{Illustration with an example}
%
%EXAMPLE: 
%(1) Abstraction and show the values, and interval which has the value of the game.
%(2) Refinement: smaller interval, better values.
%(3) If refined, within $\epsilon$.

\vspace{-1em}
\section{Experimental Results}
\label{sec:exp} \label{SEC:EXP}
\vspace{-0.5em}
%In this section we first go through some aspects of the implementation and optimizations,
%and then describe our experimental results.

\subsection{Implementation and Optimizations} \label{SEC:IMPL} \label{sec:impl}
\vspace{-0.5em}
%\smallskip\noindent{\em Implementation and Optimizations.}
The state-space of the games corresponding to the smart contracts is huge.
Hence the original game corresponding to the contract is computationally too
expensive to construct. 
Therefore, we do not first construct the game and then apply abstraction, 
instead we first apply the interval abstraction, and construct
the lower and upper abstraction and compute values in them. We optimized our implementation by removing dummy states and exploiting acyclicity using backward-induction. Details of experiments are provided in Appendix~\ref{app:exper}.

\vspace{-1em}
\subsection{Experimental Results} \label{sec:res} \label{SEC:RES}
\vspace{-0.5em}
In this section we present our experimental results (Table~\ref{tab:allres}) for the five examples
mentioned in Section~\ref{sec:examples}. 
In each of the examples, the original game is quite large, and the size of
the state space is calculated without creating them.
%We create games after the interval abstraction that are much smaller and 
%then compute their values. 
In our experimental results we show the abstracted game size, 
the refinement of games to larger sizes, and how the lower and upper bound 
on the values change. % Tables \ref{tab:rps}--\ref{tab:trans} summarize results of running our implementation over the examples which we explain in details below. 
We used an Ubuntu machine with 3.2GHz Intel i7-5600U CPU and 12GB RAM.% Details are provided in Appendix~\ref{app:model}.

% Please add the following required packages to your document preamble:
% \usepackage{multirow}
\begin{table}[!!htbp]
	\centering
	\vspace{-3mm}
	\resizebox{12.6cm}{!}{

	\begin{tabular}{|c|c|lcr|c|c|lcr|c|}
		\hline
		\multicolumn{11}{|c|}{\textbf{Rock-Paper-Scissors}}                                                                                                                                                                                                                                    \\ \hline
		 Size                          & \multicolumn{10}{c|}{Abstractions}                                                                                                                                                                                                   \\ \hline
		\multirow{5}{*}{$> 2.5 \cdot 10^{14}$} & \multicolumn{5}{c|}{Correct Program}                                                    & \multicolumn{5}{c|}{Buggy Variant}                                                                                                         \\ \cline{2-11} 
		& states   & $[l$ & $,$ & $u]$ & time   & states                    & $[l$ & $,$                  & $u]$ & time                    \\ \cline{2-11} 
		& $19440$  & $[0.00$                         & $,$ & $10.00]$                    & $367$  & $25200$                   & $[0.00$                         & $,$                  & $10.00]$                    & $402$                   \\ \cline{2-11} 
		& $135945$ & $[1.47$                         & $,$ & $6.10]$                     & $2644$ & \multirow{2}{*}{$258345$} & \multirow{2}{*}{$[8.01$}        & \multirow{2}{*}{$,$} & \multirow{2}{*}{$10.00]$}   & \multirow{2}{*}{$4815$} \\ \cline{2-6}
		& $252450$ & $[1.83$                         & $,$ & $5.59${]}                   & $3381$ &                           &                                 &                      &                             &                         \\ \hline \hline

		\multicolumn{11}{|c|}{\textbf{Auction}}                                                                                                                                                                                                                          \\ \hline
		Size                                  & \multicolumn{10}{c|}{Abstractions}                                                                                                                                                                                       \\ \hline
		\multirow{5}{*}{$>5.2 \cdot 10^{14}$} & \multicolumn{5}{c|}{Correct Program}                                                                        & \multicolumn{5}{c|}{Buggy Variant}                                                                         \\ \cline{2-11} 
		& states   & $[l$ & \multicolumn{1}{l}{$,$} & $u]$ & time   & states   & $[l$ & \multicolumn{1}{l}{$,$} & $u]$ & time  \\ \cline{2-11} 
		& $3360$   & $[0$                            & $,$                     & $1000]$                     & $68$   & $2880$   & $[0$                            & $,$                     & $1000]$                     & $38$  \\ \cline{2-11} 
		& $22560$  & $[0$                            & $,$                     & $282]$                      & $406$  & $27360$  & $[565$                          & $,$                     & $1000]$                     & $552$ \\ \cline{2-11} 
		& $272160$ & $[0$                            & $,$                     & $227]$                      & $4237$ & $233280$ & $[748$                          & $,$                     & $1000]$                     & 3780  \\ \hline \hline

		\multicolumn{11}{|c|}{\textbf{Lottery}}                                                                                                                                                                                                                                                                                          \\ \hline
		Size                     & \multicolumn{10}{c|}{Abstractions}                                                                                                                                                                                                                                                          \\ \hline
		\multirow{5}{*}{$>2.5 \cdot 10^8$} & \multicolumn{5}{c|}{Correct Program}                                                                                                         & \multicolumn{5}{c|}{Buggy Variant}                                                                                                           \\ \cline{2-11} 
		& states                     & $[l$ & $,$                  & $u]$ & time                     & states                     & $[l$ & $,$                  & $u]$ & time                     \\ \cline{2-11} 
		& $1539$                     & $[-1$                           & $,$                  & $1]$                        & $17$                     & $1701$                     & $[-1$                           & $,$                  & $1]$                        & $22$                     \\ \cline{2-11} 
		& \multirow{2}{*}{$2457600$} & \multirow{2}{*}{$[0$}           & \multirow{2}{*}{$,$} & \multirow{2}{*}{$0]$}       & \multirow{2}{*}{$13839$} & \multirow{2}{*}{$2457600$} & \multirow{2}{*}{$[-1$}          & \multirow{2}{*}{$,$} & \multirow{2}{*}{$-1]$}      & \multirow{2}{*}{$13244$} \\
		&                            &                                 &                      &                             &                          &                            &                                 &                      &                             &                          \\ \hline

\end{tabular}

\begin{tabular}{|c|c|lcr|c|c|lcr|c|}
	\hline
	\multicolumn{11}{|c|}{\textbf{Sale}}                                                                                                                                                                                                                              \\ \hline
	Size                                  & \multicolumn{10}{c|}{Abstractions}                                                                                                                                                                                        \\ \hline
	\multirow{5}{*}{$>4.6 \cdot 10^{22}$} & \multicolumn{5}{c|}{Correct Program}                                                                        & \multicolumn{5}{c|}{Buggy Variant}                                                                          \\ \cline{2-11} 
	& states   & $[l$ & \multicolumn{1}{l}{$,$} & $u]$ & time   & states   & $[l$ & \multicolumn{1}{l}{$,$} & $u]$ & time   \\ \cline{2-11} 
	& $17010$  & $[0$                            & $,$                     & $2000]$                     & $226$  & $17010$  & $[0$                            & $,$                     & $2000]$                     & $275$  \\ \cline{2-11} 
	& $75762$  & $[723$                          & $,$                     & $1472]$                     & $1241$ & $81202$  & $[1167$                         & $,$                     & $2000]$                     & $1733$ \\ \cline{2-11} 
	& $131250$ & $[792$                          & $,$                     & $1260]$                     & $2872$ & $124178$ & $[1741$                         & $,$                     & $2000]$                     & 2818   \\ \hline \hline

	\multicolumn{11}{|c|}{\textbf{Transfer}}                                                                                                                                                                                                                           \\ \hline
	Size               & \multicolumn{10}{c|}{Abstractions}                                                                                                                                                                                                   \\ \hline
	\multirow{5}{*}{$>10^{23}$} & \multicolumn{5}{c|}{Correct Program}                                                    & \multicolumn{5}{c|}{Buggy Variant}                                                                                                         \\ \cline{2-11} 
	& states   & $[l$ & $,$ & $u]$ & time   & states                    & $[l$ & $,$                  & $u]$ & time                    \\ \cline{2-11} 
	& $1040$   & $[0$                            & $,$ & $2000]$                     & $20$   & $6561$                    & $[0$                            & $,$                  & $2000]$                     & $237$                   \\ \cline{2-11} 
	& $32880$  & $[844$                          & $,$ & $1793]$                     & $562$  & \multirow{2}{*}{$131520$} & \multirow{2}{*}{$[1716$}        & \multirow{2}{*}{$,$} & \multirow{2}{*}{$2000]$}    & \multirow{2}{*}{$3979$} \\ \cline{2-6}
	& $148311$ & $[903$                          & $,$ & $1352${]}                   & $3740$ &                           &                                 &                      &                             &                         \\ \hline
\end{tabular}

}
		\vspace{1mm}
		\caption{Experimental results for the contracts and their buggy counterparts. $l:=\gamevalue(G^\down, u^\down)$ denotes the lower value and $u := \gamevalue(G^\up, u^\up)$ is the upper value. Running times are reported in seconds.}
		\label{tab:allres}
		\vspace{-8mm}
\end{table}

%\begin{table}[h]
%	\centering
	
%	\caption{Experimental results with $k=1$ for the auction contract and its buggy variant. In the table, $l:=\gamevalue(G^\down, u^\down)$ denotes the lower abstraction value and $u := \gamevalue(G^\up, u^\up)$ is the value of upper abstraction.}
%	\label{tab:auction}
%\end{table}

% Please add the following required packages to your document preamble:
% \usepackage{multirow}
%\begin{table}[h]
%	\centering
	
%		\caption{Experimental results with $k=3$ for the lottery contract and its buggy variant. In the table, $l:=\gamevalue(G^\down, u^\down)$ denotes the lower abstraction value and $u := \gamevalue(G^\up, u^\up)$ is the value of upper abstraction.}
%		\label{tab:lot}
%\end{table}

%\begin{table}[h]
%	\centering
	
	%\caption{Experimental results with $k=3$ for the token sale contract and its buggy variant (Figure \ref{prog:sale}). In the table, $l:=\gamevalue(G^\down, u^\down)$ denotes the lower abstraction value and $u := \gamevalue(G^\up, u^\up)$ is the value of upper abstraction.}
	%\label{tab:sale}
%\end{table}

%\begin{table}[h]
%	\centering
	
		%\caption{Experimental results with $k=2$ for the token transfer contract and its buggy variant (Figure \ref{prog:tran}). In the table, $l:=\gamevalue(G^\down, u^\down)$ denotes the lower abstraction value and $u := \gamevalue(G^\up, u^\up)$ is the value of upper abstraction.}
		%\label{tab:trans}
%\end{table}

\smallskip\noindent{\em Interpretation of the experimental results.}
Our results demonstrate the effectiveness of our approach 
in automatically approximating values of large games
and real-world smart contracts. Concretely, the following points are shown:

\begin{itemize}
	\item {\em Refinement Property.} By repeatedly refining the abstractions, values of lower and upper abstractions get closer at the expense of a larger state space.
	\item {\em Distinguishing Correct and Buggy Programs.} Values of the lower and upper abstractions provide an approximation interval containing the contract value. These intervals shrink with refinement until the intervals for correct and buggy programs become disjoint and distinguishable. 
	\item {\em Bug Detection.} One can anticipate a sensible value for the contract, and an approximation interval not containing the value shows a bug. For example, in token sale, the objective (number of tokens sold) is at most $1000$, while results show the buggy program has a value between $1741$ and $2000$.
	\item {\em Quantification of Economic Consequences.} Abstracted game values can also be seen as a method to quantify and find limits to the economic gain or loss of a party. For example, our results show that if the buggy auction contract is deployed, a party can potentially gain no more than $1000$ units from it.
	
\end{itemize}

\vspace{-5mm}
\section{Comparison with Related Work}\label{sec:related} \label{app:compare} \label{sec:comparison_new}
\vspace{-2mm}
\smallskip\noindent{\em Blockchain security analysis.}
The first security analysis of Bitcoin protocol was done by Nakamoto~\cite{nakamoto2008bitcoin} who showed resilience of the blockchain against double-spending. A stateful analysis was done by Sapirshtein et al.~\cite{sapirshtein2015optimal} and by Sompolinsky and Zohar~\cite{SompolinskyZ16}
in which states of the blockchain were considered. It was done using MDPs where only the attacker decides on her actions and the victim follows a predefined protocol.
Our paper is the first work that is using two-player and concurrent games to analyze contracts and the first to use stateful analysis on arbitrary smart contracts, rather than a specific protocol.

\smallskip\noindent{\em Smart contract security.}
Delmolino et al.~\cite{delmolino2015step} held a contract programming workshop and showed that even simple contracts can contain incentive misalignment bugs.
Luu et al.~\cite{loiluu} introduced a symbolic model checker with which they could detect specific erroneous patterns.
However the use of model checker cannot be extended to game-theoretic analysis.
Bhargavan et al.~\cite{bhargavan2016formal} translated solidity programs to $F^*$ and then used standard verification tools to detect vulnerable code patterns.
See \cite{atzei2016survey} for a survey of the known causes for Solidity bugs that result in security vulnerabilities.

\smallskip\noindent{\em Games and verification.}
Abstraction for concurrent games has been considered wrt qualitative temporal objectives~\cite{Church62,PR89,de2004three,alur1998alternating}.
Several works considered concurrent games with only deterministic (pure) strategies~\cite{HenzingerJM03,SAS2000,de2004three}. Concurrent games with pure strategies are extremely restrictive and effectively similar to turn-based games. The min-max theorem (determinacy) does not hold for them even in special cases of one-shot games or games with qualitative objectives.

Quantitative analysis with games has also been considered~\cite{BCHJ09,CCHRS11,chatterjee2015qualitative}.
However these approaches either consider games without concurrent interactions or do not 
consider any abstraction-refinement.
A quantitative abstraction-refinement framework has been considered in~\cite{CernyHR13};
however, there is no game-theoretic interaction.
Abstraction-refinement for games has also been considered 
for counter-example guided control and planning~\cite{ChatterjeeHJM05,HenzingerJM03};
however, these works do not consider games with concurrent interaction, nor quantitative objectives.
Moreover,~\cite{ChatterjeeHJM05,HenzingerJM03}  start with a finite-state model without variables, and interval 
abstraction is not applicable to these game-theoretic frameworks.
In contrast, our technical contribution is an abstraction-refinement 
approach for quantitative games and its application to analysis of 
smart contracts. A recent work~\cite{prevpaper} also considers quantitative games for analyzing attacks in cryptocurrencies, but it focuses on long-term mean-payoff analysis, rather than finite-horizon analysis, and is not able to automatically synthesize the games from the protocols. 

\smallskip\noindent{\em Formal methods in security.}
There is a huge body of work on program analysis for security;
see~\cite{SabelfeldM03,Abadi12} for survey.
Formal methods are used to create safe programming languages (e.g.,~\cite{fuchs2009scandroid,SabelfeldM03}) and to define new logics that can express security properties (e.g.,~\cite{burrows1989logic,flac,ArdenLM15}).
They are also used to automatically verify security and cryptographic protocols, e.g.,~\cite{abadi2000reconciling, blanchet2008automated} and~\cite{avalle2014formal} for a survey.
However, all of these works aimed to formalize qualitative properties such as privacy violation and information leakage.
To the best of our knowledge, our framework is the first attempt to use
formal methods as a tool for reasoning about monetary loses and identifying them as security errors.

\smallskip\noindent{\em Bounded model checking.}
Bounded Model Checking
(BMC), was first proposed by Biere et al. in 1999~\cite{biere1999symbolic}.
The basic idea in BMC is to search for a counterexample in executions whose length
is bounded by some integer $k$. If no bug is found then one increases $k$ until either a bug
is found, the problem becomes intractable, or some pre-known upper bound is reached.

\smallskip\noindent{\em Interval abstraction.}
The first infinite abstract domain was introduced in~\cite{cousot1977static}. This was later used
to prove that infinite abstract domains can lead to effective static
analysis for a given programming language~\cite{cousot1992comparing}.
However, none of the standard techniques is applicable to game analysis.

\vspace{-4mm}
\section{Conclusion}
\vspace{-2.5mm}
\label{sec:conclusion}
In this work we present a programming language for smart contracts, and an abstraction-refinement approach for quantitative concurrent games 
to automatically analyze (i.e., compute worst-case guaranteed utilities of)
such contracts. This is the first time a quantitative stateful game-theoretic framework is studied for formal analysis of smart contracts. There are several interesting directions of future work.
First, we present interval-based abstraction techniques for such games, and 
whether different abstraction techniques can lead to more scalability or 
other classes of contracts is an interesting direction of future work.
Second, since we consider worst-case guarantees, the games we obtain are 
two-player zero-sum games. 
The extension to study multiplayer games and compute values for rational agents
is another interesting direction of future work.
Finally, in this work we do not consider interaction between smart contracts,
and an extension to encompass such study will be a subject of its own.

\newpage
\bibliographystyle{plain}
\bibliography{krish1,krish2,yaron,amir}

\begin{thebibliography}{10}

\bibitem{Abadi12}
Mart{\'{\i}}n Abadi.
\newblock Software security: {A} formal perspective - (notes for a talk).
\newblock In {\em {FM} 2012: Formal Methods - 18th International Symposium.
  Proceedings}, 2012.

\bibitem{abadi2000reconciling}
Mart{\i}n Abadi and Phillip Rogaway.
\newblock Reconciling two views of cryptography.
\newblock In {\em Proceedings of the IFIP Conference on Theoretical Computer
  Science}, pages 3--22. Springer, 2000.

\bibitem{alur1998alternating}
Rajeev Alur, Thomas~A Henzinger, Orna Kupferman, and Moshe~Y Vardi.
\newblock Alternating refinement relations.
\newblock In {\em International Conference on Concurrency Theory}, pages
  163--178. Springer, 1998.

\bibitem{ArdenLM15}
Owen Arden, Jed Liu, and Andrew~C. Myers.
\newblock Flow-limited authorization.
\newblock In {\em {IEEE} 28th Computer Security Foundations Symposium}, pages
  569--583, 2015.

\bibitem{flac}
Owen Arden and Andrew~C. Myers.
\newblock A calculus for flow-limited authorization.
\newblock In {\em 29th IEEE Symp. on Computer Security Foundations (CSF)},
  2016.

\bibitem{atzei2016survey}
Nicola Atzei, Massimo Bartoletti, and Tiziana Cimoli.
\newblock A survey of attacks on ethereum smart contracts.
\newblock {\em IACR Cryptology ePrint Archive}, 2016:1007, 2016.

\bibitem{avalle2014formal}
Matteo Avalle, Alfredo Pironti, and Riccardo Sisto.
\newblock Formal verification of security protocol implementations: a survey.
\newblock {\em Formal Aspects of Computing}, 26(1):99--123, 2014.

\bibitem{bhargavan2016formal}
Karthikeyan Bhargavan, Antoine Delignat-Lavaud, C{\'e}dric Fournet, Anitha
  Gollamudi, Georges Gonthier, Nadim Kobeissi, Natalia Kulatova, Aseem Rastogi,
  Thomas Sibut-Pinote, Nikhil Swamy, et~al.
\newblock Formal verification of smart contracts: Short paper.
\newblock In {\em PLAS}. ACM, 2016.

\bibitem{biere1999symbolic}
Armin Biere, Alessandro Cimatti, Edmund Clarke, and Yunshan Zhu.
\newblock Symbolic model checking without bdds.
\newblock {\em Tools and Algorithms for the Construction and Analysis of
  Systems}, pages 193--207, 1999.

\bibitem{BCHJ09}
Roderick Bloem, Krishnendu Chatterjee, Thomas~A. Henzinger, and Barbara
  Jobstmann.
\newblock Better quality in synthesis through quantitative objectives.
\newblock In {\em {CAV} 2009}, pages 140--156, 2009.

\bibitem{bonneau2015sok}
Joseph Bonneau, Andrew Miller, Jeremy Clark, Arvind Narayanan, Joshua~A Kroll,
  and Edward~W Felten.
\newblock Sok: Research perspectives and challenges for bitcoin and
  cryptocurrencies.
\newblock In {\em 2015 IEEE Symposium on Security and Privacy}, pages 104--121.
  IEEE, 2015.

\bibitem{burch1992symbolic}
Jerry Burch, Edmund Clarke, Kenneth McMillan, David Dill, and Lain-Jinn Hwang.
\newblock Symbolic model checking: 1020 states and beyond.
\newblock {\em Information and Computation}, 98(2), 1992.

\bibitem{burrows1989logic}
Michael Burrows, Martin Abadi, and Roger~M Needham.
\newblock A logic of authentication.
\newblock In {\em Proceedings of the Royal Society of London A: Mathematical,
  Physical and Engineering Sciences}, pages 233--271. The Royal Society, 1989.

\bibitem{buterin2013ethereum}
Vitalik Buterin et~al.
\newblock Ethereum white paper.
\newblock 2013.

\bibitem{CCHRS11}
Pavol Cern{\'{y}}, Krishnendu Chatterjee, Thomas~A. Henzinger, Arjun
  Radhakrishna, and Rohit Singh.
\newblock Quantitative synthesis for concurrent programs.
\newblock In {\em {CAV} 2011}, pages 243--259, 2011.

\bibitem{CernyHR13}
Pavol Cern{\'{y}}, Thomas~A. Henzinger, and Arjun Radhakrishna.
\newblock Quantitative abstraction refinement.
\newblock In {\em POPL}, 2013.

\bibitem{prevpaper}
Krishnendu Chatterjee, Amir~Kafshdar Goharshady, Rasmus Ibsen-Jensen, and Yaron
  Velner.
\newblock Ergodic mean-payoff games for the analysis of attacks in
  crypto-currencies.
\newblock {\em arXiv preprint arXiv:1806.03108}, 2018.

\bibitem{ChatterjeeHJM05}
Krishnendu Chatterjee, Thomas~A. Henzinger, Ranjit Jhala, and Rupak Majumdar.
\newblock Counterexample-guided planning.
\newblock In {\em UAI}, pages 104--111, 2005.

\bibitem{chatterjee2015qualitative}
Krishnendu Chatterjee and Rasmus Ibsen-Jensen.
\newblock Qualitative analysis of concurrent mean-payoff games.
\newblock {\em Information and Computation}, 242:2--24, 2015.

\bibitem{Church62}
A.~Church.
\newblock Logic, arithmetic, and automata.
\newblock In {\em Proceedings of the International Congress of Mathematicians},
  pages 23--35. Institut Mittag-Leffler, 1962.

\bibitem{ClarkeBook}
E.M. Clarke, O.~Grumberg, and D.~Peled.
\newblock {\em Model Checking}.
\newblock MIT Press, 1999.

\bibitem{coinmarketcap}
coinmarketcap.com.
\newblock Crypto-currency market capitalizations.
\newblock 2017.
\newblock \url{http://coinmarketcap.com/}.

\bibitem{cousot1977static}
Patrick Cousot and Radhia Cousot.
\newblock Static determination of dynamic properties of generalized type
  unions.
\newblock In {\em ACM SIGPLAN Notices}, volume~12, pages 77--94. ACM, 1977.

\bibitem{cousot1992comparing}
Patrick Cousot and Radhia Cousot.
\newblock Comparing the galois connection and widening/narrowing approaches to
  abstract interpretation.
\newblock In {\em PLILP}, pages 269--295. Springer, 1992.

\bibitem{DAO_analysis}
Phil Daian.
\newblock Analysis of the dao exploit.
\newblock June 2016.
\newblock
  \url{http://hackingdistributed.com/2016/06/18/analysis-of-the-dao-exploit/}.

\bibitem{de2004three}
Luca de~Alfaro, Patrice Godefroid, and Radha Jagadeesan.
\newblock Three-valued abstractions of games: Uncertainty, but with precision.
\newblock In {\em Logic in Computer Science.} IEEE, 2004.

\bibitem{delmolino2015step}
Kevin Delmolino, Mitchell Arnett, Ahmed~E Kosba, Andrew Miller, and Elaine Shi.
\newblock Step by step towards creating a safe smart contract: Lessons and
  insights from a cryptocurrency lab.
\newblock {\em IACR Cryptology ePrint Archive}, 2015:460, 2015.

\bibitem{solidity}
EthereumFoundation.
\newblock Solidity language documentation.
\newblock 2017.
\newblock \url{http://solidity.readthedocs.io/en/latest/index.html}.

\bibitem{Etherscan}
Etherscan.io.
\newblock Contract accounts.
\newblock 2017.
\newblock \url{https://etherscan.io/accounts/c}.

\bibitem{EtherscanTokens}
Etherscan.io.
\newblock Token information.
\newblock 2017.
\newblock \url{https://etherscan.io/tokens}.

\bibitem{ethercamp}
ethnews.com.
\newblock Hkg token has a bug and needs to be reissued.
\newblock January 2017.
\newblock
  \url{https://www.ethnews.com/ethercamps-hkg-token-has-a-bug-and-needs-to-be-reissued}.

\bibitem{FV97}
J.~Filar and K.~Vrieze.
\newblock {\em Competitive {Markov} Decision Processes}.
\newblock Springer, 1997.

\bibitem{fuchs2009scandroid}
Adam~P Fuchs, Avik Chaudhuri, and Jeffrey~S Foster.
\newblock Scandroid: Automated security certification of android.
\newblock {\em Technical report}, 2009.

\bibitem{godefroid1996partial}
Patrice Godefroid, J~Van~Leeuwen, J~Hartmanis, G~Goos, and Pierre Wolper.
\newblock {\em Partial-order methods for the verification of concurrent
  systems: an approach to the state-explosion problem}, volume 1032.
\newblock Springer Heidelberg, 1996.

\bibitem{HenzingerJM03}
Thomas~A. Henzinger, Ranjit Jhala, and Rupak Majumdar.
\newblock Counterexample-guided control.
\newblock In {\em Automata, Languages and Programming, 30th International
  Colloquium, {ICALP} 2003, Eindhoven, The Netherlands, June 30 - July 4, 2003.
  Proceedings}, 2003.

\bibitem{SAS2000}
Thomas~A Henzinger, Rupak Majumdar, Freddy Mang, and Jean-Fran{\c{c}}ois
  Raskin.
\newblock Abstract interpretation of game properties.
\newblock In {\em International Static Analysis Symposium}, 2000.

\bibitem{DAO_white_paper}
Christoph Jentzsch.
\newblock Decentralized autonomous organization to automate governance.
\newblock 2016.
\newblock \url{https://download.slock.it/public/DAO/WhitePaper.pdf}.

\bibitem{SMC08}
Ranjit Jhala and Rupak Majumdar.
\newblock Software model checking.
\newblock {\em ACM Comput. Surv.}, 41(4):21:1--21:54, October 2009.

\bibitem{ens}
Nick Johnson.
\newblock A beginner's guide to buying an ens domain.
\newblock May 2017.
\newblock
  \url{https://medium.com/the-ethereum-name-service/a-beginners-guide-to-buying-an-ens-domain-3ccac2bdc770}.

\bibitem{king}
kingoftheether.com.
\newblock King of the ether.
\newblock 2017.
\newblock
  \url{https://www.kingoftheether.com/thrones/kingoftheether/index.html}.

\bibitem{loiluu}
Loi Luu, Duc-Hiep Chu, Hrishi Olickel, Prateek Saxena, and Aquinas Hobor.
\newblock Making smart contracts smarter.
\newblock 2016.
\newblock \url{http://eprint.iacr.org/2016/633}.

\bibitem{tokentransferbug}
Loi Luu and Yaron Velner.
\newblock Audit report for digix's smart contract platform.
\newblock May 2017.

\bibitem{nakamoto2008bitcoin}
Satoshi Nakamoto.
\newblock Bitcoin: A peer-to-peer electronic cash system.
\newblock 2008.

\bibitem{PR89}
A.~Pnueli and R.~Rosner.
\newblock On the synthesis of a reactive module.
\newblock In {\em POPL'89}, pages 179--190. ACM Press, 1989.

\bibitem{queille1982specification}
J~Queille and Joseph Sifakis.
\newblock Specification and verification of concurrent systems in cesar.
\newblock In {\em International Symposium on programming}, pages 337--351.
  Springer, 1982.

\bibitem{SabelfeldM03}
Andrei Sabelfeld and Andrew~C. Myers.
\newblock Language-based information-flow security.
\newblock {\em {IEEE} Journal on Selected Areas in Communications},
  21(1):5--19, 2003.

\bibitem{sapirshtein2015optimal}
Ayelet Sapirshtein, Yonatan Sompolinsky, and Aviv Zohar.
\newblock Optimal selfish mining strategies in bitcoin.
\newblock {\em arXiv preprint arXiv:1507.06183}, 2015.

\bibitem{MIT_REVIEW}
Tom Simonite.
\newblock \$80 million hack shows the dangers of programmable money.
\newblock June 2016.
\newblock
  \url{https://www.technologyreview.com/s/601724/80-million-hack-shows-the-dangers-of-programmable-money/}.

\bibitem{SompolinskyZ16}
Yonatan Sompolinsky and Aviv Zohar.
\newblock Bitcoin's security model revisited.
\newblock {\em CoRR}, abs/1605.09193, 2016.

\bibitem{teutschcryptocurrencies}
Jason Teutsch, Sanjay Jain, and Prateek Saxena.
\newblock When cryptocurrencies mine their own business?
\newblock In {\em Financial Cryptography and Data Security}, 2016.

\bibitem{DAO_problem}
Adam Toobin.
\newblock The dao, ethereum's \$150 million blockchain investment fund, has a
  logic problem.
\newblock 2016.
\newblock
  \url{https://www.inverse.com/article/16314-the-dao-ethereum-s-150-million-blockchain-}\\\url{
  investment-fund-has-a-logic-problem}.

\bibitem{yaronaudit}
Victor Tran and Yaron Velner.
\newblock Coindash audit report.
\newblock June 2017.

\bibitem{wood2014ethereum}
Gavin Wood.
\newblock Ethereum yellow paper.
\newblock 2014.

\end{thebibliography}

\newpage
\appendix
\newpage
\section{Appendix to Section \ref{sec:prog_lang}: Contracts Programming Language}
\subsection{Appendix to Section \ref{sec:syntax}: Formal Syntax}\label{app:syntax}
The following grammar formally defines the syntax of our language, the non-terminal ``Contract'' is considered to be the start symbol and non-terminals and keywords are enclosed in quotation marks. We usually refrain from putting unnecessary parentheses or braces in the examples, even though they are formally part of this grammar. Also, $\varepsilon$ denotes the empty string.

\begin{grammar}
	[(colon){$\rightarrow$}]
	[(semicolon){ $|$}]
	[(period){ $~$}]
	[(quote){`}{'}]
	Contract : "contract" ContractName "\{" VariableList . FunctionList "\}"
	
	VariableList : VariableDefinition ; VariableList . VariableDefinition
	
	VariableDefinition : NumericDefinition ; MapDefinition ; IdDefinition
	
	NumericDefinition : "numeric" . NumericVariableName . "" Integer "," Integer "]" "=" Integer ";"
	
	MapDefinition : "map" . MapVariableName . "[" Integer "," Integer "]" "=" Integer ";"
	
	IdDefinition : "id" . IdVariableName . "=" Integer ";" ; "id" . IdVariableName . "=" "null" ";"
	
	FunctionList : Function ; Function . FunctionList
	
	Function : OnePartyFunction ; MultiPartyFunction
	
	OnePartyFunction : FunctionHeader . "(" . ParameterList . ")" "\{" CommandList "\}"
	
	MultiPartyFunction : FunctionHeader  "("  ConcurrentParamList ")" "\{" CommandList"\}"
	
	FunctionHeader : "function" FunctionName "[" Integer "," Integer "]"
	
	ParameterList : $\varepsilon$ ; Parameter ; ParameterList "," Parameter
	
	Parameter : "payable" . VariablePartyPair ; VariablePartyPair
	
	VariablePartyPair : SingletonName  ":" "caller" ; SingletonName ":" IdVariableName 
	
	SingletonName: NumericVariableName ; IdVariableName ; MapVariableName "[" IdVariableName "]" 
	
	ConcurrentParamList : ConcurrentParam ; ConcurrentParamList "," ConcurrentParam
	
	ConcurrentParam : ConcurrentPay ; ConcurrentDecision
	
	ConcurrentPay :  "payable" . SingletonName ":" IdVariableName
	
	ConcurrentDecision : SingletonName ":" IdVariableName " = " DefaultValue
	
	CommandList : Command ; Command . CommandList
	
	Command : If ; Assignment ; Payout ; "return" ";"
	
	If : IfWithElse ; IfWithoutElse
	
	IfWithElse : "if" "(" Bexpr ")" "\{" . CommandList . "\}" "else" "\{". CommandList . "\}"
	
	IfWithoutElse : "if" "("  Bexpr ")" "\{" . CommandList . "\}"
	
	Assignment : NumericVariableName . "=" . Expr ";" ; MapVariableName"[" IdVariableName "]" "=" Expr ";" ; IdVariableName "=" IdVariableName ";" ; IdVariableName "=" "null" ";"
	
	Payout : "payout" "(" Party "," Expr ")" ";" 
	
	Party : "caller" ; IdVariableName ; "null" 
	
	Bexpr : Literal ; "(" Bexpr . "and" . Bexpr ")" ; "(" Bexpr . "or" . Bexpr ")" ; "not" . "(" Bexpr ")"
	
	Literal : Expr . Cmp . Expr ; Party "==" Party
	
	Cmp : "<" ; ">" ; "<=" ; ">=" ; "==" ; "!="
	
	Expr : NumericVariableName ; MapVariableName "[" IdVariableName "]" ; Integer ; "(" Expr . Operation . Expr ")"
	
	Operation : "+" ; "-" ; "*" ; "/"
	
\end{grammar}

Note that ``/'' is considered to be integer division. Moreover, multi-party functions are not allowed to use the keyword ``caller'' because it is ambiguous.

\subsection{Appendix to Section \ref{sec:semantics}: Formal Semantics} \label{app:semantics}

In order to define runs of programmed contracts, we first formally introduce the notions of last sets and control flow graphs.

\subsubsection{Last Sets} For an entity $\mathsf{e}$, according to the grammar above, we define the set $\mathsf{Last(e)}$ to contain those labels that can potentially form its last part in an execution. Formally, the last sets are defined by the following attribute grammar:

\begin{grammar}
	[(colon){$\rightarrow$}]
	[(semicolon){ $|$}]
	[(period){ $~$}]
	[(quote){`}{'}]
		
	Function : OnePartyFunction ; MultiPartyFunction
	
	Last(Function) = Last(OnePartyFunction) ; Last(MultiPartyFunction)
	\\ \\
	OnePartyFunction : FunctionHeader . "("  ParameterList  ")" "\{" CommandList "\}"
	
	Last(OnePartyFunction) = Last(CommandList)
	\\ \\
	MultiPartyFunction : FunctionHeader  "("  ConcurrentParamList ")" "\{" CommandList"\}"
	
	Last(MultiPartyFunction) = Last(CommandList)
	\\ \\
	CommandList$_0$ : Command ; Command . CommandList$_1$
	
	Last(CommandList$_0$) = Last(Command) ; Last(CommandList$_1$)
	\\ \\
	Command : If ; Assignment ; Payout ; "return" ";"
	
	Last(Command) = Last(If) ; $\{$ Label(Command) $\}$ ; $\{$ Label(Command) $\}$ ; $\{$ Label(Command)$\}$
	\\ \\
	If : IfWithElse ; IfWithoutElse
	
	Last(If) = Last(IfWithElse) ; Last(IfWithoutElse)
	\\ \\
	IfWithElse : "if" "(" Bexpr ")" "\{"  CommandList$_0$  "\}" "else" "\{" CommandList$_1$  "\}"
	
	Last(IfWithElse) = Last(CommandList$_0$) $\cup$ Last(CommandList$_1$)
	\\ \\
	IfWithoutElse : "if" "("  Bexpr ")" "\{" . CommandList . "\}"
	
	Last(IfWithoutElse) = Last(CommandList) $\cup$ $\{$ Label(IfWithoutElse) $\}$
\end{grammar}

\subsubsection{Control Flow Graphs} \label{app:cfg}
Given a contract and a function $f_i$, the respective control flow graph $CFG_i = (V, E)$ is a directed graph in which $V$ consists of all the labels in $f_i$ and $E$ consists of the following directed edges:
\begin{itemize}
	\item \emph{Initial edges.}
	If the first label in the CommandList of $f_i$ is $l$, then there is an edge from $\avval_i$ to $l$ with no condition, i.e.~with true condition.
	\item \emph{Return edges.} If $l$ is the label of a return statement, then there is only one edge $(l, \akhar_i)$ with no condition and none of the following cases apply.
	\item \emph{Normal flow edges.} For every label $l$ of an Assignment or a Payout, if the entity labeled $l+1$ is in the same scope, i.e.~if there are no closed braces between the two, then there is an edge $(l, l+1)$ with no condition.
	\item \emph{Conditional flow edges.} If $l$ is a label for an ``if'' statement
	\begin{itemize}
		\item If $l^+$ is the first label of $f_i$ after all parts of $l$, i.e.~ after its CommandList(s), then there is an edge from every $l'$ in the last set(s) of the CommandList(s) of $l$ to $l^+$. These edges have no conditions.
		\item If $l^t$ is the first label in first CommandList of $l$, then there is an edge $(l, l^t)$ with condition equal to the Bexpr of $l$. 
		\item Let $l^f$ be the first label in second CommandList of $l$, or if no such label exists $l^f = l^+$. Then there is an edge $(l, l^f)$ with condition equal to the negation of the Bexpr of $l$.
		\end{itemize}
		\item \emph{Final edges.} If $l \in last(f_i)$, then there is an edge $(l, \akhar_i)$ with no condition.
		\end{itemize}
\subsubsection{Formal Definition of a Run} \label{app:run}

 A run $\rho$ of the contract is a finite sequence $\left\{\rho_j = (t_j, b_j, l_j, \val_j, c_j) \right\}_{j=0}^{r}$ of states that corresponds to an execution of the contract. Formally, $\rho$ is a run if the following conditions hold:
 \begin{itemize}
 	\item \emph{Initial state.} $t_0 = b_0 = l_0 = 0$, $\val_0 = X_0$ and $c_0 = \perp$.
 	\item \emph{Final state.} $\pho_r$ must be the only state in which the time stamp $t_r$ is greater than all upper time limits $\Tup(f_i)$ of functions $f_i$ for $1 \leq i \leq n$.
 	\item For every $0 \leq j < r$, $\rho_{j+1}$ must be obtained from $\rho_j$ by one of the following rules:
 	\begin{itemize}
 		\item \emph{Ticks of the clock.} $l_j \in \{0, \akhar_1, \akhar_2, \ldots, \akhar_n\}$ and $\pho_{j+1} = (t_j + 1, b_j, 0, \val_j, \perp)$. Intuitively, this means that the clock can only tick when no function is being executed and a tick of the clock switches the contract to global scope, i.e.~outside all functions.
 		%\item \emph{End of function transitions.} $l_j \in \{\akhar_1, \akhar_2, \ldots,\akhar_n\}$ or $l_j$ is the label of a \texttt{return} statement and $\pho_{j+1} = (t_j, b_j, 0, val_j, \perp)$. This models the return to global scope from a function.
 		\item \emph{Execution of one-party functions.} $l_j = 0$, $c_j = \perp$, $f_i$ is a one-party function such that $\Tdown(f_i) \leq t_j \leq \Tup(f_i)$, $p$ is party and $\pho_{j+1} = (t_j, b_j, \avval_i, \val, p)$. This models the case when no function is being run and the party $p$ calls function $f_i$. In the bounded modeling case, in order to avoid the same party running the same function at the same time stamp more than once, we consider a function call as a tuple $(t_j, f_i, p)$ and require that the current function call is lexicographically later than the previous one, if it exists. Formally, if $\pho_j = (t_j, b_j, \akhar_k, \val, p_j)$ and $(t_j, f_i, p)$ is lexicographically later than $(t_j, f_k, p_j)$ then it is allowed to have $\pho_{j+1} = (t_j, b_j, \avval_i, \val, p)$. In normal execution of the programs, this ordering is not enforced.
 		\item \emph{Execution of multi-party functions.} $l_j = 0$, $c_j = \perp$, $f_i$ is a multi-party function, $t_j = \Tdown(f_i)$, and $\pho_{j+1} = (\Tup(f_i), b_j, \avval_i, \val_j, p)$ for some party $p$. This models the execution of function $f_i$ when the time $\Tup(f_i)$ is reached.
 		\item \emph{Transitions inside functions.} 
 		\begin{itemize}
 			\item \emph{First transitions in functions.} If $l_j = \avval_i$, let $\mathcal{P} \subseteq X$ be the set of variables that are designated to be paid at the beginning of $f_i$, and similarly, define $\mathcal{D}$ as the set of variables that are decided at the beginning of the function $f_i$. Then it is allowed to have $\pho_{j+1} = (t_j, b_j + \sum_{x \in \mathcal{P}} \val_{j+1}(x), l_{j+1}, \val_{j+1}, c_j)$ if the edge $(l_j, l_{j+1})$ is present in $CFG_i$ and its condition evaluates to true under $\val_j$ and for all variables $x \not\in \mathcal{P} \cup \mathcal{D}$, $\val_{j+1}(x) = \val_j(x)$ and for all $x \in \mathcal{P} ,\max\{0, \Rdown(x)\} \leq \val_{j+1}(x) \leq \Rup(x)$. The latter condition is because one cannot pay a negative amount.
 			\item \emph{Program flow transitions.} If $l_j$ is a label in $f_i$ and $l_j \neq \avval_i$,
 			\begin{itemize}
 				\item if $l_j$ is the label of an Assignment of the form $x = expr;$, then it is allowed have $\pho_{j+1} = (t_j, b_j, l_{j+1}, \val_{j+1}, c_j)$ if $(l_j, l_j+1)$ is an edge of $CFG_i$ and for all variables $y \neq x$, $\val_{j+1}(y) = \val_j(y)$ and $\val_{j+1}(x) = \max\{ \Rdown(x), \min\{ \Rup(x) ,  \val_j(expr) \} \}$. This ensures that all variables remain in bounds. Intuitively, overflows and underflows cause the variable to store its highest possible, resp. lowest possible, value. Similarly, if $l_j$ is the label of an assignment to an id variable, then the valuation must remain the same, except that the value of that id variable must be updated accordingly.
 				
 				\item if $l_j$ is the label of an If statement, either with or without an else part, then $\pho_{j+1} = (t_j, b_j, l_{j+1}, \val_j, c_j)$ is allowed only if $(l_j, l_{j+1})$ is an edge of $CFG_i$ and its condition evaluates to true under $\val_j$ and $c_j$.
 				
 				\item if $l_j$ is the label of a Payout statement of the form payout $x$ to $p$ for some expression $x$ and party $p$, defining $\omega = \min \{b_j, \max\{0, \val_j(x)\}\}$, it is allowed to have $\pho_{j+1} = (t_j, b_j - \omega, l_{j+1}, \val_j, c_j)$ if $(l_j, l_{j+1})$ is an edge of $CFG_i$. This makes sure that payouts are always nonnegative and do not exceed the current balance of the contract.
 				
 				\item if $l_j$ is the label of a return statement and has an edge to $l'$ in the control flow graph of the current function, then we can have $\pho_{j+1} = (t_j, b_j, l', \val, p)$.
 			\end{itemize}
 		\end{itemize}
 	\end{itemize}
 \end{itemize}

\section{Appendix to Section \ref{sec:games}: Bounded Analysis and Games}
\subsection{Appendix to Section \ref{sec:concur_games}: Computing Game Values} \label{app:games}

\smallskip\noindent{\em Notation.} We use $u_\fhlen(\sigma_1, \sigma_2)$ to denote $\E^{(\sigma_1, \sigma_2)} \left[u_\fhlen(p)\right]$.

\smallskip\noindent \emph{Best Responses.} We say that $\sigma_2$ is a best response to $\sigma_1$ if $u_\fhlen(\sigma_1, \sigma_2) = \inf_{\varsigma_2} u_\fhlen(\sigma_1, \varsigma_2)$. It is easy to see that for each $\sigma_1$ there exists a pure best response, i.e.~a best response that does not use randomization, because given $\sigma_1$, the utility of every $\sigma_2$ is an affine combination of utilities of some pure strategies for player~2.

\smallskip\noindent\emph{Stateless Games.} A stateless (matrix) game between two players is a tuple $\matgame = (A_1, A_2, v)$ where $A_i$ is the set of actions available to player~$i$, and $v: A_1 \times A_2 \rightarrow \mathbb{R}$ is a utility function. In a stateless game, a strategy for player $i$ is simply a member of $A_i$ and a mixed strategy $\varsigma_i$ for player $i$ is a probability distribution over $A_i$. As usual, the game is zero-sum, i.e., player~1 is trying to maximize the utility and player~2 to minimize it. The value of a stateless game $\matgame$ is defined as $ \gamevalue(\matgame) := \sup_{\varsigma_1} \inf_{\varsigma_2} u(\varsigma_1, \varsigma_2) = \inf_{\varsigma_2} \sup_{\varsigma_1} u(\varsigma_1, \varsigma_2)$. It is well-known that the latter equality , which is called determinacy theorem, holds and that game values for stateless games can be computed efficiently by linear programming.

Let $G = (S, s_0, A, \Gamma_1, \Gamma_2, \delta)$ and $s \in S$, then by $G_s$ we mean the game structure $G$ starting at state $s$, i.e.~$G_s = (S, s, A, \Gamma_1, \Gamma_2, \delta)$. Of course $G = G_{s_0}$.

\smallskip\noindent\emph{Local Games.} Let $(G, u)$ be a concurrent game and $s \in S$, then we denote by $G[s, t] = (A_1, A_2, v)$, the local stateless game at state $s$, when there are $t$ more steps left. We now formalize this. It must be the case that $A_1 = \Gamma_1(s), A_2 = \Gamma_2(s)$ and for every pair of actions $(a_1, a_2)$ in $A_1 \times A_2$, if $s'$ is the successor state of $s$ with these actions, i.e.~if $\delta(s, a_1, a_2) = s'$, then $v(a_1, a_2) = u(s) + \gamevalue_{t-1}(G_{s'}, u)$. It is easy to check that by definition, $\gamevalue(G[s, t]) = \gamevalue_t(G_s, u)$.

\smallskip\noindent\emph{Value Iteration.} The last equality above leads to a natural algorithm for computing game values. In order to compute $\gamevalue_\fhlen(G, u)$, we compute $\gamevalue_t(G_s, u) = \gamevalue(G[s, t])$ for every state $s \in S$ and $t \leq \fhlen$. This algorithm is called value iteration \cite{FV97} and is illustrated as Algorithm \ref{algo:vi}. 

\begin{algorithm}
	\caption{Value Iteration}\label{algo:vi}
	\begin{algorithmic}[1]
		\Procedure{ValueIteration}{$G = (S, s_0, A, \Gamma_1, \Gamma_2, \delta)$, $u$, $\fhlen$}
		\For{$t \in \{0, 1, \ldots, \fhlen\}$}
		\For {$s \in S$}
		\If {$t=0$}
		\State $\gamevalue_t(G_s, u) \gets 0$
		\Else
		\State Create local game $G[s, t]$
		\State Compute $\gamevalue(G[s, t])$ by linear programming
		\State $\gamevalue_t(G_s, u) \gets \gamevalue(G[s, t])$
		\EndIf
		\EndFor
		
		\State \Return $\gamevalue_\fhlen(G_{s_0})$
		\EndFor
		\EndProcedure
	\end{algorithmic}
\end{algorithm}

\smallskip\noindent\emph{Corollary.} The determinacy theorem for stateless local games and value iteration lead to the following determinacy theorem: For every concurrent game $(G, u)$,
$$
\gamevalue_\fhlen(G, u) = \sup_{\sigma_1} \inf_{\sigma_2} u_\fhlen(\sigma_1, \sigma_2) = \inf_{\sigma_2} \sup_{\sigma_1} u_\fhlen(\sigma_1, \sigma_2).
$$

\subsection{Appendix to Section \ref{SEC:TRANS}: Translation of Contracts to Games} \label{app:contogame}

In this section we formally show how a bounded contract $C_k$ can be converted to a game and establish relationships and correspondences between the two.

Let $C = (N, I, M, R, X_0, F, T)$ be a contract, $C_k$ its bounded model with $k$ parties, and $S$ the set of all states that can appear in runs of $C_k$. Also, let $\party \in \parties$ be a party with an objective $o$. 
We construct a game $(G = (S \cup \{\dummy \}, s_0, A, \Gamma_1, \Gamma_2, \delta), u)$ corresponding to $(C_k, \party, o)$ as follows:
\begin{itemize}
	\item Each state of the game corresponds to a state of the contract, except for $\dummy$ which is a dummy state used to control the end of the contract. 
	\item $s_0$ is the initial state of $C_k$, i.e.~$s_0 = (0, 0, 0, X_0, \perp)$. This is the state from which all runs of $C_k$ begin.
	\item Player~1 of the game corresponds to the party $\party$ and player~2 corresponds to all other parties colluding together.
	\item Each action of the game corresponds to either a single move in the contract or a tuple of $k-1$ moves, i.e.~$A = \moves \cup \moves^{k-1}$. Player~1 actions are only single-move, to model moves by $\party$, and player~2 actions are tuples of $k-1$ moves, to model the joint action of the $k-1$ other parties.
	\item At each state $s \in S$, $\Gamma_1(s) = P_\party(s)$. Intuitively, this means that the actions available to player~1 at state $s$ of the game correspond to permitted moves of $\party$ in state $s$ of the contract.
	\item Similarly, at each state $s \in S$, $\Gamma_2(s) = \prod_{\party' \in \parties \setminus \{\party\}} P_{\party'}(s)$, i.e., each action of player~2 at state $s$ of the game corresponds to a joint action of all parties other than $\party$ in state $s$ of the contract.
	\item $\Gamma_1 (\dummy) = \Gamma_2(\dummy) = \{ \naught \}$.
	\item $\delta$ follows the rules of the contract. More precisely, if in an execution of the contract at state $s$, party $\party$ chooses the permitted move $m_\party \in P_\party(s)$ and other parties, $\party'_1, \party'_2, \ldots, \party'_{k-1}$ choose permitted moves $m'_1, m'_2, \ldots, m'_{k-1}$, respectively, and the execution continues with the new state $s'$, then in the game, we define $\delta(s, m_\party, (m'_1, m'_2, \ldots, m'_{k-1})) = s'$. Moreover, if $s$ is the last state of a run of $C_k$, then $\delta(s, m_\party, (m'_1, m'_2, \ldots, m'_{k-1})) = \dummy$ and $\delta(\dummy, \naught, \naught) = \dummy$.
	\item $u$ mimics $o$, i.e.~for every state $s$:
	\begin{itemize}
		\item If $s$ is the last state of a run of $C_k$ and $o$'s expression part, i.e.~the part other than $(\party^+ - \party^-)$, evaluates to $x$ under the valuation at $s$, then in the game we have $u(s) = x$.
		\item If $o$ has a $(p^+ - p^-)$ part and the label in $s$ points to a payout statement of the contract that pays a value of $x$ under $s$'s valuation to $\party$, then $u(s) = x$.
		\item If $o$ has a $(p^+ - p^-)$ part and $s$ is a successor of a payment/decision in the contract in which $\party$ has paid a total value of $x$ under the corresponding valuation, then $u(s) = -x$.
		\item If $s$ satisfies several of the conditions above, then $u(s)$ is the sum of values defined in all those cases. If it satisfies none of them, $u(s) = 0$.
	\end{itemize}
\end{itemize}

By simply following the definitions, it can be settled that (i)~every policy profile $\pi$ of $C_k$ corresponds to a strategy profile $\purestrategy$ in $G$, in which player~1 acts in correspondence with the moves of $\party$ and player~2 acts in correspondence with the joint moves of other parties, (ii)~ every randomized policy $\xi_\party$ in the contract corresponds naturally to a mixed strategy for player~1 in the game and vice-versa, (iii) every randomized policy profile for parties other than $\party$ in the contract corresponds to a mixed strategy for player~2 in the game, in which actions are distributed according to the distributions for other parties' moves. However, this correspondence is not surjective, given that different components in a mixed strategy for player~2 need not be independently distributed, and (iv) The outcome of a policy profile, resp. randomized policy profile, in $C_k$ is equal to the utility of its corresponding strategy profile, resp. mixed strategy profile, in the game.

Note that these results hold only if our finite horizon $\fhlen$ is big enough to model all the runs as plays of the game. This can be achieved by letting $\fhlen$ be at least as big as the length of the longest run, which according to our semantics is no more than $\vert \parties \vert \times t \times l$, where $\parties$ is the set of parties, $l$ is the number of labels and $t$ is the largest time in the contract. If $\fhlen$ is greater than the length of a run, then that run corresponds to a play of the game that reaches the dummy state $\dummy$ and stays there. 

\begin{theorem}[Correspondence] 
	If $(G, u)$ is the game corresponding to the bounded contract $C_k$ for a party $\party$ with objective $o$, then the value of the game, $\gamevalue(G, u)$ is equal to the value of the bounded contract, $\mathsf{V}(C_k, o, \party)$.
\end{theorem}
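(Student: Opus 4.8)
The plan is to establish the equality $\gamevalue(G,u) = \mathsf{V}(C_k, o, \party)$ by exhibiting a value-preserving correspondence between the objects over which the two expressions quantify, exactly along the four bullet points already sketched before the theorem statement. Recall $\mathsf{V}(C_k, o, \party) = \sup_{\xi_\party} \inf_{\xi_{-\party}} \E^{(\xi_\party,\xi_{-\party})}[\kappa(\pho,o,\party)]$ and $\gamevalue(G,u) = \sup_{\sigma_1}\inf_{\sigma_2}\E^{(\sigma_1,\sigma_2)}[u_\fhlen(p)]$, where $\fhlen$ is chosen (as in the discussion above) to be at least the length of the longest run of $C_k$, so that every run embeds as a play that eventually reaches and loops at $\dummy$ with zero additional reward.

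First I would make the bijection between histories precise: by induction on length, a run prefix $\eta$ of $C_k$ corresponds to a unique history $h_\eta$ of $G$ with $\last(h_\eta)$ equal to $\ENd(\eta)$, using the definition of $\delta$ (which ``follows the rules of the contract''). The key point to check here is that the permitted-move sets match the action sets, i.e.\ $\Gamma_1(s) = P_\party(s)$ and $\Gamma_2(s) = \prod_{\party'\neq\party} P_{\party'}(s)$, which is immediate from the construction. Consequently a randomized policy $\xi_\party$ for $\party$ corresponds to a mixed strategy $\sigma_1$ for player~1 and vice versa (the map is a bijection), while a randomized policy profile $\xi_{-\party}$ for the other parties maps to a mixed strategy $\sigma_2$ for player~2 — but here the map is only injective, not surjective, since a product of independent distributions over $P_{\party'}(s)$ is a special case of a distribution over $\prod_{\party'}P_{\party'}(s)$. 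Second, I would verify the outcome-matching identity: for a fixed policy profile (resp.\ pure strategy profile) the induced run $\pho$ and play $p$ satisfy $\kappa(\pho,o,\party) = u_\fhlen(p)$. This follows by summing the definition of $u$ over the states of the run: the $(\party^+-\party^-)$ part of $o$ is accumulated exactly by the $+x$ contributions at payout states and the $-x$ contributions at payment states, and the expression part of $o$ contributes exactly once, at the final state. Taking expectations, the identity lifts to $\E^{(\xi_\party,\xi_{-\party})}[\kappa] = \E^{(\sigma_1,\sigma_2)}[u_\fhlen(p)]$ for corresponding randomized policy profiles / mixed strategy profiles.

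With these pieces in place, the equality of the two sup-inf expressions needs a short argument to deal with the non-surjectivity of the player~2 correspondence. For the inequality $\mathsf{V}(C_k,o,\party) \le \gamevalue(G,u)$: fix any $\xi_\party$, let $\sigma_1$ be the corresponding player-1 strategy, and let $\sigma_2$ be an arbitrary player-2 strategy; by the Remark after equation~(\ref{eq:gameval}) it suffices to consider pure $\sigma_2$, and a pure player-2 strategy is a (history-dependent) choice of a tuple of moves, hence corresponds to a genuine policy profile $\xi_{-\party}$ for the other parties — so $\inf_{\xi_{-\party}}\E[\kappa] \le \inf_{\sigma_2}\E[u_\fhlen]$, and taking sup over $\xi_\party$ gives the claim. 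For the reverse inequality $\gamevalue(G,u) \le \mathsf{V}(C_k,o,\party)$: fix $\sigma_1$, let $\xi_\party$ be the corresponding randomized policy; any $\xi_{-\party}$ maps to some mixed $\sigma_2$ with equal expected payoff, so $\inf_{\sigma_2}\E[u_\fhlen] \le \inf_{\xi_{-\party}}\E[\kappa]$, and again take sup. I expect the main obstacle — really the only subtle point — to be precisely this bookkeeping around player~2: making sure that restricting the adversary to pure strategies (justified by determinacy of the local games, Appendix~\ref{app:games}) lines up the ``extra'' correlated mixed strategies of player~2 in $G$ with the collusive-but-still-a-policy-profile adversary in the contract, so that the two infima coincide rather than merely being comparable in one direction. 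Everything else is a routine induction on run length plus linearity of expectation.
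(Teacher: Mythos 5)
Your proposal is correct and takes essentially the same route as the paper's proof: both rest on the correspondence facts (histories/run prefixes, strategies/policies, payoff preservation) and resolve the only delicate point — that player~2's correlated mixed strategies have no counterpart among independent randomized policy profiles — by passing through pure (best-response) strategies of player~2, which do correspond to genuine policy profiles. The only difference is organizational: you chain inequalities directly on the two sup-inf expressions, whereas the paper first extracts witnessing optimal profiles via a compactness/continuity argument and reasons about best responses; your bookkeeping is, if anything, slightly leaner.
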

\begin{proof} Note that by definition 
	$$\mathsf{V}(C_k, o, \party) = \sup_{\xi_\party \in \Xi_\party} \inf_{\xi_{-\party} \in \Xi_{-\party}} \kappa(\xi_\party, \xi_{-\party}, o, \party),$$
	where by $\kappa(\xi_\party, \xi_{-\party}, o, \party)$ we mean the expected outcome when using $(\xi_\party, \xi_{-\party})$.
	Given that the set of all randomized policy profiles is compact and $\kappa$ is a bounded continuous function, there must exist a specific randomized policy profile $\upbeta = (\upbeta_\party, \upbeta_{-\party})$
	such that $\mathsf{V}(C_k, o, \party) = \kappa(\upbeta, o, \party)$. According to (iv), there exists a mixed strategy profile $\upvarsigma = (\upvarsigma_1, \upvarsigma_2)$ in the game, such that $\kappa(\upbeta, o, \party) = u(\upvarsigma)$. We prove that $\upvarsigma_2$ is a best response to $\upvarsigma_1$ in $(G, u)$. Let $\upvarsigma_2'$ be a pure best response to $\upvarsigma_1$, then according to (i) and (ii), there exists a randomized policy profile $\upbeta' = (\upbeta_\party, \upbeta'_{-\party})$ such that $\kappa(\upbeta', o, \party) = u(\upvarsigma_1, \upvarsigma'_2)$, but by definition of $\upbeta$, $\kappa(\upbeta', o, \party) \geq \kappa(\upbeta, o, \party)$ so $u(\upvarsigma_1, \upvarsigma_2') \geq u(\upvarsigma) $ and hence $\upvarsigma_2$ is a best response to $\upvarsigma_1$. So $\mathsf{V}(C_k, o, \party) = \kappa(\upbeta, o, \party) = \inf_{\sigma_2} u(\upvarsigma_1, \sigma_2) \leq \sup_{\sigma_1} \inf_{\sigma_2} u(\sigma_1, \sigma_2) = \gamevalue(G, u)$.
	
	We now prove the other side. A similar argument as in the previous case shows that there must exist a mixed strategy profile $\upvarsigma = (\upvarsigma_1, \upvarsigma_2)$ such that $u(\upvarsigma) = \sup_{\sigma_1} \inf_{\sigma_2} u(\sigma_1, \sigma_2) = \gamevalue(G, u)$, given that changing $\upvarsigma_2$ with any other best response for $\upvarsigma_1$ does not change $u(\upvarsigma)$, wlog we can assume that $\upvarsigma_2$ is a pure best response. Then, according to (1) and (2) above, there exists a randomized policy profile $\upbeta = (\upbeta_\party, \upbeta_{-\party})$ such that $ u(\upvarsigma) = \kappa(\upbeta_\party, \upbeta_{-\party}, o, \party)$. We claim that $\kappa(\upbeta_\party, \upbeta_{-\party}, o, \party) = \inf_{\xi_{-\party}} \kappa(\upbeta_\party, \xi_{-\party}, o, \party)$, because otherwise one can create a better response for $\upvarsigma_1$ than $\upvarsigma_2$ using the process explained in the previous part. Therefore, $\gamevalue(G, u) = u(\upvarsigma) = \inf_{\xi_{-\party}} \kappa(\upbeta_\party, \xi_{-\party}, o, \party) \leq \sup_{\xi_\party} \inf_{\xi_{-\party}} \kappa(\xi_\party, \xi_{-\party}, o, \party) = \mathsf{V}(C_k, o, \party).$
\end{proof}

\section{Appendix to Section \ref{sec:exp}: Details of Implementation and Experimental Results} \label{app:backward_induction} \label{app:exper}

\subsection{Appendix to Section \ref{sec:impl}: Implementation Optimizations}

Besides the optimization of first applying the abstraction, and then 
obtaining the games, we also apply other optimizations to solve
the abstracted games faster. 
They are as follows:
\begin{itemize}
	\item {\em Dummy state removal.} First as explained in Remark~\ref{rem:dummy}
	the dummy states of the abstraction are for conceptual clarity, and 
	we do not construct them explicitly as part of state space.
	
	\item {\em Exploiting acyclic structure.} 
	While for general concurrent games with cycles the standard procedure 
	to obtain values is by value-iteration algorithm, we exploit the 
	structure of games obtained from contracts.
	Since time is an important aspect of the semantics of contracts, 
	and is encoded as part of the state space, the concurrent games obtained 
	from contracts and their interval-abstraction are acyclic. 
	For acyclic games, instead of general value-iteration, backward induction 
	approach can be applied.
	The backward induction approach works for acyclic games, and it computes
	values bottom-up (according to the topological sorting order), i.e.,
	it computes value of a state of the games only when the values of all 
	successors are determined.
	Details of the backward-induction is presented below.
\end{itemize}
The above optimizations ensure that we can solve the abstract games 
quite efficiently to obtain their values.

\subsection{Acyclic Game Structures and Backward Induction}
In this section we provide an overview of acyclic game structures and the classic backward induction algorithm. 

\smallskip\noindent\emph{Acyclic Game Structures.} An acyclic concurrent game structure is a tuple $G = (S, s_0, A, \Gamma_1, \Gamma_2, \delta)$ where all the parts have the same meaning and definition as in concurrent games, except that (i)~$\Gamma_i(s)$ is now allowed to be empty, i.e., it is possible to have dead-end states in which at least one of the players has no available actions and (ii)~no state is reachable from itself using a non-empty sequence of transitions of the game.

\smallskip\noindent\emph{Plays.} A play of an acyclic game structure $G$ is a finite sequence of states and valid action pairs starting from $s_0$, where each state is the result of transitioning from its previous state and action pair and the sequence ends in a dead-end.

\smallskip\noindent\emph{Strategies and Mixed Strategies.} A strategy $\purestrategy_i$ for player~$i$ in an acyclic game structure $G$ is a function $\purestrategy_i : \Histories \rightarrow A$ suggesting a valid action to player~$i$ after each history of the game. However, given that in an acyclic concurrent game the future unfolding of a play is not dependent on how the current state was reached, we can restrict our focus to the set of memoryless strategies, i.e., strategies that only depend on the current state of the game and not the whole history. A memoryless strategy is a function of the form $\purestrategy_i : S \rightarrow A$. In the same spirit, a memoryless mixed strategy for player~$i$ is a function $\sigma_i: S \rightarrow \Delta(A)$, that assigns a probability distribution over permitted actions of player~$i$ at each state. A mixed strategy profile $\sigma = (\sigma_1, \sigma_2)$ induces a unique probability distribution $\Prob^\sigma \left[\cdot\right]$ over the set of all runs. In the remainder of this section, all strategies are considered to be memoryless.

\smallskip\noindent\emph{Utilities.} A utility function $u$ for player~$1$ assigns a real utility $u(s)$ to each state $s$ of the game. Utility of a play $\pi$ is defined as the sum of utilities of its states.

\smallskip\noindent\emph{Game Value.} Given an acyclic game structure $G$, together with a utility function $u$, the value of the game $(G, u)$ is defined as:
$$
\gamevalue(G, u) = \sup_{\sigma_1} \inf_{\sigma_2} \E^{(\sigma_1, \sigma_2)} \left[ u(\pi) \right]
$$  
where $\sigma_i$ iterates over the set of all mixed strategies of player~$i$.

\noindent\emph{Local Games.} Let $(G, u)$ be an acyclic concurrent game and $s \in S$, then we denote by $G[s] = (A_1, A_2, v)$, the local stateless game at state $s$. It must be the case that $A_1 = \Gamma_1(s), A_2 = \Gamma_2(s)$ and for every pair of actions $(a_1, a_2)$ in $A_1 \times A_2$, if $s'$ is the successor state of $s$ with these actions, i.e.~if $\delta(s, a_1, a_2) = s'$, then $v(a_1, a_2) = u(s) + \gamevalue(G_{s'}, u)$. It is easy to check that by definition, $\gamevalue(G[s]) = \gamevalue(G_s, u)$.

\noindent\emph{Backward Induction.} The last equality above leads to a natural algorithm for computing game values. In order to compute $\gamevalue(G, u)$, we compute $\gamevalue(G_s, u) = \gamevalue(G[s])$ for every state $s \in S$. This algorithm is called backward induction and is illustrated as Algorithm \ref{algo:induction}. 

\begin{algorithm}
	\caption{Backward Induction}\label{algo:induction}
	\begin{algorithmic}[1]
		\Procedure{BackwardInduction}{$G = (S, s_0, A, \Gamma_1, \Gamma_2, \delta)$, $u$}
		
		\State Sort $S$ in Topological Order
		\For {$s \in S$}
		\If {$\Gamma_1(s) = \emptyset$ \textbf{or} $\Gamma_2(s) = \emptyset$}
		\State $\gamevalue(G_s, u) \gets u[s]$
		\Else
		\State Create local game $G[s]$
		\State Compute $\gamevalue(G[s])$ by linear programming
		\State $\gamevalue(G_s, u) \gets \gamevalue(G[s])$
		\EndIf
		\EndFor
		
		\State \Return $\gamevalue(G_{s_0})$
		\EndProcedure
	\end{algorithmic}
\end{algorithm}

\subsection{Appendix to Section \ref{sec:res}: Modelling Details in Experiments} \label{app:model}

\smallskip\noindent{\em Rock-Paper-Scissors.} For this contract, we consider the correct implementation as in Figure \ref{prog:rps} along with a classic buggy variant that allows sequential, instead of simultaneous, interactions with the contract. We analyze the case where $k = 2$, because rock-paper-scissors is essentially a two-player game and adding more parties has no effect other than increasing the complexity. The contract is analyzed from the point-of-view of its issuer, Alice. Her utility is defined to be her overall monetary payoff plus $10$ units if she wins the game. 

\smallskip\noindent\emph{Auction.} We consider the Auction contract and its buggy version as in Figure \ref{prog:auction}. The analysis models utility of a party as her monetary gain (or loss) plus value of the auctioned object if she wins the auction. In the buggy version, one can reduce her bid freely. In this case analyzing the contract with only one party, i.e., with $k = 1$, already detects the bug and leads to a value strictly greater than $0$, while a correct contract should always have an objective value of $0$.  

\smallskip\noindent\emph{Lottery.} We consider a lottery contract, and its wrong implementation, as in Figure \ref{prog:lottery} and analyze it from issuer's viewpoint assuming that the only utility is the monetary gain or loss that occurs during the contract. This example is different from other examples because variables have quite small ranges and even a single refinement step can find precise values. Hence, as shown in the results, there is no need of further refinement.

\smallskip\noindent\emph{Token sale.} In the token sale example of Figure \ref{prog:sale}, we model utility as the total number of sold tokens. In a bug-free system this should never become more than the number of available tokens, which in this case is fixed to $1000$. However, a buggy implementation might lead to a higher payoff.

\smallskip\noindent\emph{Token transfer.} A similar analysis with the same objective function as in token sale for a contract which supports both sale and transfer of tokens (Figure \ref{prog:tran}), can detect a bug that creates extra tokens in some transfers. The bug is again due to the fact that values are strictly greater than $1000$.

\end{document}